\newcommand{\I}{{\mathbb{I}}}
\newcommand{\C}{{\mathbb{C}}}
\newcommand{\R}{{\mathbb{R}}}
\newcommand{\N}{{\mathbb{N}}}
\newcommand{\Z}{{\mathbb{Z}}}
\newcommand{\cP}{{\mathcal{P}}}
\newcommand{\cX}{{\mathcal{X}}}
\newcommand{\erf}{\mathop{\mathrm{erf}}}
\newcommand{\spn}{\mathop{\mathrm{span}}}
\newcommand{\poly}{\mathop{\mathrm{poly}}}
\newcommand{\ii}{{\mathrm{i}}}
\newcommand{\dd}{\mathrm{d}}
\renewcommand{\o}{o}
\newcommand{\be}{\begin{equation}}
\newcommand{\ee}{\end{equation}}
\def\ba#1\ea{\begin{align}#1\end{align}}
\newtheorem{theorem}{Theorem}
\newtheorem{lemma}[theorem]{Lemma}
\newtheorem{corollary}[theorem]{Corollary}
\newtheorem{hypothesis}[theorem]{Hypothesis}
\theoremstyle{definition}
\newtheorem{definition}{Definition}
\begin{document}


\title{Quantum Snake Walk on Graphs}

\author{
\normalsize Ansis Rosmanis\thanks{arosmani@cs.uwaterloo.ca} \\[.5ex]
\small David R.\ Cheriton School of Computer Science \\
\small and Institute for Quantum Computing \\
\small University of Waterloo
}

\date{} 
\maketitle


\begin{abstract}
I introduce a new type of continuous-time quantum walk on graphs called the quantum snake walk, the basis states of which are fixed-length paths (snakes) in the underlying graph. First I analyze the quantum snake walk on the line, and I show that, even though most states stay localized throughout the evolution, there are specific states which most likely move on the line as wave packets with momentum inversely proportional to the length of the snake. Next I discuss how an algorithm based on the quantum snake walk might potentially be able to solve an extended version of the glued trees problem which asks to find a path connecting both roots of the glued trees graph. No efficient quantum algorithm solving this problem is known yet.
\end{abstract}

\section{Introduction}

In 1994, Shor's ground-breaking paper presented an efficient quantum algorithm for integer factoring \cite{Shor1994}. Many other very important results have followed, but the true potential of quantum computers is still unknown. One step towards understanding it is to consider what are the differences between quantum and classical computers in a setting where we are given the input via a black-box oracle. Because we cannot know the inner workings of the oracle, it is easier to prove lower bounds in this model and therefore in many cases it is easier to show a separation between the best quantum and classical algorithms.

Ever since Deutsch presented a simple oracle problem that can be solved on a quantum computer using fewer oracle queries than on any classical computer \cite{Deutsch1}, scientists have tried to come up with more and more problems for which quantum computers outperform their classical counterparts; some very artificial, some quite natural. Bernstein and Vazirani gave the first example of an oracle problem which can be solved in polynomial time on a quantum computer, but requires superpolynomial time on a classical computer \cite{BV93}. Shortly after that Simon gave an example in which this separation is exponential \cite{Simon94}. Yet it is unclear what are the best methods for the construction of efficient quantum algorithms. While quantum Fourier sampling is probably the most popular such method so far, many algorithms are also based on the concept of quantum walk. In particular, continuous-time quantum walks on graphs, introduced by Farhi and Gutmann \cite{Farhi1998}, give rise to fast algorithms for NAND tree evaluation \cite{FarhiGoldstoneGutmann2008} and unstructured search \cite{FG98,ChGst}. Continuous-time quantum walks are also known to be able to perform universal quantum computation \cite{ChildsUniv}, and they can solve some oracle problems exponentially faster than any classical algorithm \cite{CCDFGSart,ChSchVaz}.

In this paper we introduce a new type of continuous-time quantum walk on graphs, the basis states of which are not vertices of the graph, but paths in it of a fixed length $n$. We define this new walk on an unweighted undirected graph $G$ as a regular continuous-time quantum walk (the one introduced by Farhi and Gutmann) on a more complex weighted undirected graph $G_n$ which is constructed from the original graph $G$. Since, to the author of this paper, the discrete-time classical counterpart of this walk in some sense resembles the behavior of a `snake' of length $n$ which is placed on a graph along its edges and which makes random decisions in which direction to go next, we call this new walk the {\em continuous-time quantum snake walk}. 

The main motivation for the continuous-time quantum snake walk is  an extended version of the glued trees problem for which no efficient quantum algorithm yet is known. Childs et al. introduced the glued trees graph, consisting of two complete binary trees of the same height  which are connected by a random cycle that alternates between the leaves of the trees \cite{CCDFGSart}. Given a glued trees graph via a black-box oracle and the label of the root of one tree, the glued trees problem is to determine the label of the root of the other tree. While there is an efficient quantum algorithm based on the continuous-time quantum walk which solves this problem, no efficient quantum algorithm which finds a path connecting these two roots is known.  We hope that the continuous-time quantum snake walk might lead to a quantum algorithm efficiently finding such a path.

In Section \ref{sec:line} we analyze in detail the continuous-time quantum snake walk on the line. Even though most states stay localized throughout the quantum evolution, there are specific states which, under one reasonable assumption, move on the line as wave packets with momentum inversely proportional to the length of the snake. This motion is similar to the regular continuous-time quantum walk on the line. In Section \ref{sec:GTG} we discuss an algorithm based on the continuous-time quantum snake walk that solves the extended version of the glued trees problem mentioned above, that is, finds a path connecting the roots of the glued trees graph. However, it is not clear whether this algorithm runs in polynomial time.

Independently from this work, a recent paper by Mc\,Gettrick defines a discrete-time walk on the line similar to the continuous-time quantum snake walk on the line considered here \cite{McG}.

\section{Definitions and motivation} \label{sec:def}

In this section we define the main concept of this paper, the continuous-time quantum snake walk. The main reason for considering this new type of quantum walk is its potential algorithmic applications. Because of that we also define an extended version of the glued trees problem for which no efficient classical algorithm exists and no efficient quantum algorithm is known, but which might be solved efficiently using an algorithm based on the continuous-time quantum snake walk.

\subsection{Continuous-time quantum snake walk}

We will define the continuous-time quantum snake walk on one graph as the continuous-time quantum walk on another specific graph. Therefore, we define the continuous-time quantum walk first. Let $K=(U,w)$ be a weighted undirected graph, where $U$ is a set of vertices and $w$ is a weight function, which assigns a weight $w(u_1,u_2)\in\R$ to every pair $(u_1,u_2)\in U^2$. Let $\C^U$ be a Hilbert space having an orthonormal basis $\{|u\rangle:u\in U\}$ called the {\em standard basis}, and let $H_K$ be a linear operator acting on $\C^U$ such that $\langle u_1|H_K|u_2\rangle=w(u_1,u_2)$ for all $u_1,u_2\in U$. Because $K$ is undirected with all weights being real, $H_K$ is Hermitian. The {\em continuous-time quantum walk} on $K$ is defined as a quantum evolution in the space $\C^U$ governed by the Hamiltonian $H_K$ according to the Schr\"{o}dinger equation. That is, if we fix an initial state of the walk to be $|\chi(0)\rangle\in\C^U$, then we can look on the walk as a function which maps time $t\geq 0$ to the state $|\chi(t)\rangle=e^{-\ii H_Kt}|\chi(0)\rangle\in\C^U$.

Now suppose $G=(V,E)$ is an undirected unweighted graph, where $V$ is a set of vertices and $E\subset V^2$ is a set of edges. Let $S_n(G)$ be the set of all paths in $G$ which have length $n$. Here we assume that a path can visit a vertex multiple times. Let us call an element of $S_n(G)$ a {\em snake} of length $n$. That is, a snake of length $n$ is a vector $s=(v_0,\ldots,v_n)\in V^{n+1}$ such that $(v_{l-1},v_l)\in E$ for all $l\in[1\,..\,n]$. Note that we assume all paths are directed ($(v_0,v_1,\ldots,v_n)$ and $(v_n,v_{n-1},\ldots,v_0)$ are not the same snake), even though the underlying graph $G$ is undirected.

Let $s=(v_0,\ldots,v_n)\in S_n(G)$ be a snake of length $n$. We say that $s$ can move forward to a snake $t\in S_n(G)$ if there exists a vertex $v_{n+1}\in V$ such that $t=(v_1,\ldots,v_{n+1})$, and we write $m_f(s,t)$ (we think of $m_f$ as a predicate). Similarly, we say that $s$ can move backward to $t\in S_n(G)$ if there exists $v_{-1}\in V$ such that $t=(v_{-1},\ldots,v_{n-1})$, and we write $m_b(s,t)$. We also consider $m_f(s,t)$ as a binary function on $S_n(G)^2$ taking the value $1$ if and only if the predicate $m_f$ is true for a pair $(s,t)$, and similarly for $m_b$. Let $a_{s,t}=m_f(s,t)+m_b(s,t)\in\{0,1,2\}$, and let $A_n(G)$ be a matrix whose rows and columns are labeled by the elements of $S_n(G)$ and $A_n(G)_{s,t}=a_{s,t}$ for all $s,t\in S_n(G)$. Clearly $m_f(s,t)$ if and only if $m_b(t,s)$, and therefore $A_n(G)$ is a symmetric matrix. We can look on $A_n(G)$ as the adjacency matrix of a graph having $S_n(G)$ as the set of vertices, and possibly some edges having weight $2$ instead of $1$ (as a matter of fact, $A_n(G)_{s,t}=2$ if and only if there are two adjacent vertices $u$ and $v$ such that $s=(u,v,u,v,\ldots)$ and $t=(v,u,v,u,\ldots)$).

\begin{definition}
Let $G_n$ be a weighted graph with the set of vertices $S_n(G)$ and the matrix of weights $A_n(G)$. The {\em continuous-time quantum snake walk} on the graph $G$ is defined as a continuous-time quantum walk on the weighted graph $G_n$.
\end{definition}

Because we consider only continuous-time walks, we often refer to continuous-time quantum walks and continuous-time quantum snake walks, respectively, as regular quantum walks and quantum snake walks. Sometimes we use the word snake to refer, instead of an element of $S_n(G)$, to the content of a quantum register corresponding to the Hilbert space $\C^{S_n(G)}$. In some sense, we can think of the snake as being a path which can move form one position in the graph to another. In what sense the word snake is used should be clear from the context.

\subsection{Extended glued trees problem} \label{ssec:extdef}

In order to provide an example of a quantum algorithm based on a quantum walk that can solve a certain black-box problem exponentially faster than any classical algorithm, Childs et al.\ introduced the glued trees graph \cite{CCDFGSart}. The {\em glued trees graph} consists of two complete binary trees of height $N$, which are connected by a random cycle that alternates between the leaves of the two trees. An example of the graph is shown in Figure \ref{fig:gluedtrees} (for $N=3$).
Let us also call $N$ and the roots of both trees, respectively, the height and the roots of the glued trees graph.

\begin{figure}[htp]
\centering
\includegraphics[scale=0.75]{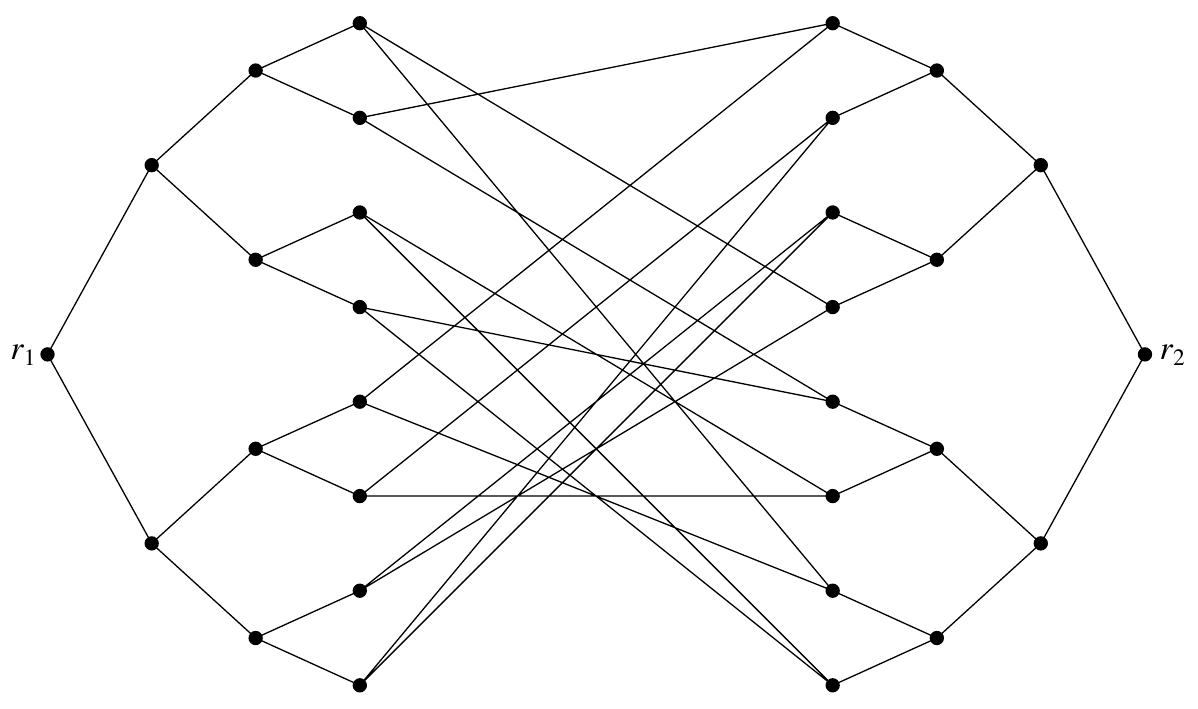}
\caption{A glued trees graph of height 3.}
\label{fig:gluedtrees}
\end{figure}

Suppose we are given a glued trees graph via a black-box oracle (given the label of a vertex as an input, the oracle outputs labels of all its neighbors) and the label of one root $r_1$. The {\em glued trees problem} is to determine the label of the other root $r_2$ (see  Figure \ref{fig:gluedtrees}). Childs et al.\ show that, if we start a continuous-time quantum walk at the root $r_1$, the walk quickly (in $\poly N$ time) traverses the graph and ends up in a superposition state which has a large overlap on the root $r_2$. We can implement continuous-time quantum walks on sparse graphs in the quantum circuit model efficiently (see \cite{ATart,ChildsThesis,BeAhClSa}), therefore there is an efficient quantum query algorithm solving the glued trees problem. It is also shown in \cite{CCDFGSart} that no classical query algorithm can solve this problem efficiently.

\begin{definition}
Given a glued trees graph via a black-box oracle and the label of one of its two roots $r_1$, the {\em extended glued trees problem} is to find a path in the graph connecting $r_1$ to the opposite root $r_2$.
\end{definition}

As far as I know, there is still no quantum algorithm known which solves the extended glued trees problem efficiently. It is not even known if such an efficient algorithm exists \cite[Section 5.4]{ChildsThesis}. However, the hope is that the quantum snake walk might lead to an efficient algorithm.

\section{Quantum snake walk on the line}   \label{sec:line}

As an example of a continuous-time quantum snake walk, let us consider the quantum snake walk on the line, that is, the graph $G=(V,E)$, where $V=\Z$ is the set of vertices and $E=\{(x,x\pm1)\,:\,x\in\Z\}$ is the set of edges. This example is relatively simple compared to quantum snake walks on other graphs. Nonetheless, understanding this walk later helps us to analyze snake walks on more complex graphs.

\subsection{The Hamiltonian}

It is convenient to think of the line also as the $X$-axis. For every snake $(v_0,\ldots,v_n)\in V^{n+1}$ on $G$ of length $n$ let $x=v_0$ be the start vertex of the snake and for $l\in[1\,..\,n]$ let $j_l\in\{0,1\}$ be such that $v_l=v_{l-1}-(-1)^{j_l}$. This gives a one-to-one relation between the set of snakes $S_n(G)$ and the set $\Z\times\{0,1\}^n$. From now on let us consider any snake $s$ to be given as a pair $(x,j_1\ldots j_n)\in\Z\times\{0,1\}^n$, and let $|x\rangle|j_1\ldots j_n\rangle$ denote $|s\rangle$.

Every snake on the line can move forward to two other snakes and move backward to two other snakes. If a snake $(x,j_1\ldots j_n)$ moves forward, then the start vertex of the new snake is determined by $j_1$. Its end vertex can either move in the positive or the negative direction of the $X$-axis. If a snake moves backward, the direction in which its start vertex moves is opposite to the direction of the first edge of the new snake. In other words, for every $j\in\{0,1\}^n$ we have
\begin{equation*}
\begin{split}
   m_f((x,j_1\ldots j_n),(x-(-1)^{j_1},j_2\ldots j_n0))\quad &\text{and}\quad
   m_f((x,j_1\ldots j_n),(x-(-1)^{j_1},j_2\ldots j_n1)), \\
   m_b((x,j_1\ldots j_n),(x-1,1j_1\ldots j_{n-1}))\quad &\text{and}\quad
   m_b((x,j_1\ldots j_n),(x+1,0j_1\ldots j_{n-1})).
\end{split}
\end{equation*}
This allows us to obtain the Hamiltonian $H_n=A_n(G)$ governing the quantum snake walk on the line. After some derivations we get
\begin{equation*}
H_n = \int_0^{2\pi}{|\tilde{k}\rangle\langle\tilde{k}|\otimes H_{n,k}\,\dd k},
\end{equation*}
where
\begin{equation*}
|\tilde{k}\rangle=\frac{1}{\sqrt{2\pi}}\sum_{x\in\Z}{e^{\ii kx}|x\rangle}
\end{equation*}
 and 
\begin{equation}
\label{eq:lineHnk}
\begin{split} 
  H_{n,k} \; = \; & \sum_{j\in\{0,1\}^{n-1}} {e^{\ii k}\;(\,|j0\rangle\langle 0j|+|j1\rangle\langle 0j|+|1j\rangle\langle j0|+|1j\rangle\langle j1| \,)} \\
             +    & \sum_{j\in\{0,1\}^{n-1}} {e^{-\ii k} (\,|0j\rangle\langle j0|+|0j\rangle\langle j1|+|j0\rangle\langle 1j|+|j1\rangle\langle 1j| \,)}
\end{split}
\end{equation}
(see \cite[Section 3.1]{RosmanisThesis} for details). For every $k_1,k_2\in\R$ we have $\langle\tilde{k_2}|\tilde{k_1}\rangle=\Delta(k_1-k_2\!\!\mod 2\pi)$,  where $\Delta$ is the Dirac delta function (see \cite{ChildsUniv}). Hence, for any value $k$, if $|\psi\rangle$ is an eigenvector of $H_{n,k}$ with the corresponding eigenvalue $\lambda$, then $|\tilde{k}\rangle\otimes|\psi\rangle$ is an eigenvector of $H_n$ corresponding to the same eigenvalue $\lambda$. 

Let us focus on finding the eigenvalues and eigenvectors of $H_{n,k}$ for an arbitrary $k$. This task becomes much easier if we express $H_{n,k}$ in a different basis. Consider two pairs of orthonormal vectors:
\begin{equation*}
\begin{split} 
& |u_{0,k}\rangle=\frac{1}{\sqrt{2}}\left(e^{-\ii k}|0\rangle+e^{\ii k}|1\rangle\right) \quad\text{and}\quad
|u_{1,k}\rangle=\frac{1}{\sqrt{2}}\left(e^{-\ii k}|0\rangle-e^{\ii k}|1\rangle\right), \\
& |v_0\rangle=\frac{1}{\sqrt{2}}\left(|0\rangle+|1\rangle\right) \quad\text{and}\quad
|v_1\rangle=\frac{1}{\sqrt{2}}\left(|0\rangle-|1\rangle\right). 
\end{split} 
\end{equation*}
 For $m\in[0\,..\,2^n-1]$ let us define a unit vector $|\widehat{m}_k\rangle$ as follows. First, let $|\widehat{0}_k\rangle=-\ii |u_{0,k}\rangle^{\otimes n}$. For $m\in[1\,..\,2^n-1]$, let $m$ written in binary using $\lfloor\log_2(m)\rfloor+1$ bits be $1m_{\lfloor \log_2(m) \rfloor}\ldots m_1$. Then for $m\in[1\,..\,2^n-1]$ we define
\begin{equation*}
  |\widehat{m}_k\rangle  =  |u_{0,k}\rangle^{\otimes n-\lfloor\log_2(m)\rfloor-1}  |u_{1,k}\rangle  |v_{m_{\lfloor \log_2(m)\rfloor}}\rangle  \ldots  |v_{m_1}\rangle.
\end{equation*}
As an example, for $n=3$ we have
  $|\widehat{0}_k\rangle=-\ii |u_{0,k}\rangle|u_{0,k}\rangle|u_{0,k}\rangle$,
  $|\widehat{1}_k\rangle=|u_{0,k}\rangle|u_{0,k}\rangle|u_{1,k}\rangle$,
  $|\widehat{2}_k\rangle=|u_{0,k}\rangle|u_{1,k}\rangle|v_{0}\rangle$,
  $|\widehat{3}_k\rangle=|u_{0,k}\rangle|u_{1,k}\rangle|v_{1}\rangle$,
  $|\widehat{4}_k\rangle=|u_{1,k}\rangle|v_{0}\rangle|v_{0}\rangle$,
  $|\widehat{5}_k\rangle=|u_{1,k}\rangle|v_{0}\rangle|v_{1}\rangle$,
  $|\widehat{6}_k\rangle=|u_{1,k}\rangle|v_{1}\rangle|v_{0}\rangle$ and
  $|\widehat{7}_k\rangle=|u_{1,k}\rangle|v_{1}\rangle|v_{1}\rangle$.
For any $n\in\N$ and $k\in\R$ the set $B_{n,k}=\{ |\widehat{0}_k\rangle, \ldots, |\widehat{2^n-1}_k\rangle \}$ is an orthonormal basis of $\C^{\{0,1\}^n}$, and \cite[Section 3.1.1]{RosmanisThesis} shows that $H_{n,k}$ expressed in this basis is
\begin{equation}
\label{eq:HnkInHatBasis}
  H_{n,k}=
  4 \cos k \, |\widehat{0}_k\rangle\langle\widehat{0}_k|+
   2 \sin k \, (|\widehat{1}_k\rangle\langle\widehat{0}_k| + |\widehat{0}_k\rangle\langle\widehat{1}_k|)+
   2 \sum_{m=1}^{2^{n-1}-1}{(|\widehat{2m}_k\rangle\langle\widehat{m}_k|+|\widehat{m}_k\rangle\langle\widehat{2m}_k|)}.
\end{equation}

From (\ref{eq:HnkInHatBasis}) we see that $\langle\widehat{m}_k|H_{n,k}|\widehat{l}_k\rangle=0$ whenever $m$ and $l$ have different greatest odd divisors, where we assume that the greatest odd divisor of $0$ is $1$. Therefore we can block diagonalize $H_{n,k}$ with respect to $B_{n,k}$ into $2^{n-1}$ blocks.
To be more precise, let us define $2^{n-1}$ orthogonal projectors 
$\Pi_{1,k}=|\widehat{0}_k\rangle\langle\widehat{0}_k|+\sum_{j=0}^{n-1}{|\widehat{2^j}_k\rangle\langle\widehat{2^j}_k|}$ and {\label{PiProjector} $\Pi_{l,k}=\sum_{j=0}^{n-\lceil\log_2 l\rceil}{|\widehat{l\cdot2^j}_k\rangle\langle\widehat{l\cdot2^j}_k|}$} for odd $l\in[3\,..\,2^n-1]$. We have $\Pi_{l_1,k}H_{n,k}\Pi_{l_2,k}=0$ whenever $l_1\neq l_2$. Therefore, in order to get the full eigenspectrum of $H_{n,k}$, we consider the eigenvalues of each block separately.

For odd $l\in[3\,..\,2^n-1]$ the block
\begin{equation*}
\Pi_{l,k}H_{n,k}\Pi_{l,k}
=2\sum_{j=0}^{n-\lceil\log_2 l\rceil-1}
{\left(|\widehat{l\cdot2^{j+1}}_k\rangle\langle\widehat{l\cdot2^j}_k|+|\widehat{l\cdot2^j}_k\rangle\langle\widehat{l\cdot2^{j+1}}_k|\right)},
\end{equation*}
is basically twice the adjacency matrix of the line segment of length $n-\lceil\log_2 l\rceil$, whose eigenvalues and eigenvectors are well known (for example, see \cite{AmbainisLNotes}). Therefore the eigenvalues of this block are $k$-independent and the eigenvectors depend only on $k$ due to the fact that the basis $B_{n,k}$ is $k$-dependent. However, because this dependence is relatively simple, as shown in \cite[Section 3.1.3]{RosmanisThesis}, the following lemma holds.

\begin{lemma} \label{the:kindependent}
Let $x_1,x_2\in\Z$ and $j_1,j_2\in\{0,1\}^n$. If $|x_1-x_2|>2n$, then
$\langle x_2,j_2|e^{-\ii H_nt}|x_1,j_1\rangle = \langle x_2,j_2|e^{-\ii K_nt}|x_1,j_1\rangle$,
where
$K_n=\int_{0}^{2\pi}|\tilde{k}\rangle\langle\tilde{k}|\otimes \Pi_{1,k}H_{n,k}\Pi_{1,k}\, \dd k$.
\end{lemma}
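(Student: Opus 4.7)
My plan is to Fourier transform in $x$ and exploit the fact that, for every odd $l\geq 3$, all $k$-dependence of $\langle j_2|\Pi_{l,k}e^{-\ii H_{n,k}t}\Pi_{l,k}|j_1\rangle$ will come from a change of basis whose effect is a phase $e^{\pm\ii nk}$ on each side. Starting from the direct-integral form of $H_n$ I would write
\[
\langle x_2,j_2|e^{-\ii H_nt}|x_1,j_1\rangle=\tfrac{1}{2\pi}\int_0^{2\pi}e^{\ii k(x_2-x_1)}\,\langle j_2|e^{-\ii H_{n,k}t}|j_1\rangle\,\dd k,
\]
and the analogous formula for $K_n$ with $K_{n,k}:=\Pi_{1,k}H_{n,k}\Pi_{1,k}$ replacing $H_{n,k}$. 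Using block invariance under (\ref{eq:HnkInHatBasis}), $e^{-\ii H_{n,k}t}=\sum_l\Pi_{l,k}e^{-\ii H_{n,k}t}\Pi_{l,k}$ and $e^{-\ii K_{n,k}t}=(I-\Pi_{1,k})+\Pi_{1,k}e^{-\ii H_{n,k}t}\Pi_{1,k}$, so
\[
e^{-\ii H_{n,k}t}-e^{-\ii K_{n,k}t}=\sum_{\text{odd }l\geq 3}\Pi_{l,k}\bigl(e^{-\ii H_{n,k}t}-I\bigr)\Pi_{l,k}.
\]
The lemma therefore reduces to showing that for each odd $l\geq 3$, the function $k\mapsto\langle j_2|\Pi_{l,k}(e^{-\ii H_{n,k}t}-I)\Pi_{l,k}|j_1\rangle$ is a Laurent polynomial in $e^{\ii k}$ with exponents in $[-2n,2n]$; Fourier inversion will then force the corresponding contribution to vanish whenever $|x_1-x_2|>2n$.

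For the key step, I would note that (\ref{eq:HnkInHatBasis}) already implies that, in the basis $\{|\widehat{l\cdot 2^j}_k\rangle\}_j$, the block $\Pi_{l,k}H_{n,k}\Pi_{l,k}$ for odd $l\geq 3$ is twice the adjacency matrix of a path, with $k$-independent entries from $\{0,2\}$. Consequently, in that basis, the entries of both $\Pi_{l,k}e^{-\ii H_{n,k}t}\Pi_{l,k}$ and $\Pi_{l,k}$ are $k$-independent numbers $c_{m,m'}(t)$ and $\delta_{m,m'}$, and all remaining $k$-dependence after translating back to the standard basis must come from the change of basis $|\widehat m_k\rangle\leftrightarrow\{|0\rangle,|1\rangle\}^{\otimes n}$ itself.

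To control that change of basis, I would set $Z:=|0\rangle\langle 0|-|1\rangle\langle 1|$ and observe that $|u_{0,k}\rangle=e^{-\ii kZ}|v_0\rangle$ and $|u_{1,k}\rangle=e^{-\ii kZ}|v_1\rangle$. Plugging these identities into the definition of $|\widehat m_k\rangle$ and pulling the $k$-dependent factors onto the first $q_m:=n-\lfloor\log_2 m\rfloor$ qubits (with the convention $q_0:=n$) yields the factorization
\[
|\widehat m_k\rangle=(e^{-\ii kZ})^{\otimes q_m}\otimes I^{\otimes(n-q_m)}\,|\widehat m_0\rangle,
\]
whence $\langle j|\widehat m_k\rangle=e^{-\ii k\,z(j,q_m)}\,\langle j|\widehat m_0\rangle$ with $z(j,q):=\sum_{i=1}^{q}(-1)^{j_i}\in[-n,n]$. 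Expanding
\[
\langle j_2|\Pi_{l,k}e^{-\ii H_{n,k}t}\Pi_{l,k}|j_1\rangle=\sum_{m,m'}e^{\ii k[z(j_1,q_{m'})-z(j_2,q_m)]}\,\langle j_2|\widehat m_0\rangle\,c_{m,m'}(t)\,\overline{\langle j_1|\widehat{m'}_0\rangle},
\]
with $m,m'$ ranging over block $l$, displays every summand as a $k$-independent coefficient times $e^{\ii kr}$ with $r\in[-2n,2n]$; the same bound controls $\langle j_2|\Pi_{l,k}|j_1\rangle$. Summing over odd $l\geq 3$ and applying Fourier inversion closes the argument.

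The one step with genuine content is the factorization of $|\widehat m_k\rangle$; once it is secured, the bound $|z(j,q_m)|\leq n$ together with Fourier inversion finishes everything, and the rest is just bookkeeping on top of the block decomposition already exhibited in the excerpt.
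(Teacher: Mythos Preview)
Your proof is correct and follows essentially the approach the paper points to (the full argument is deferred to \cite[Section 3.1.3]{RosmanisThesis}): the ``relatively simple'' $k$-dependence of the basis $B_{n,k}$ alluded to just before the lemma is precisely your factorization $|\widehat m_k\rangle=(e^{-\ii kZ})^{\otimes q_m}\otimes I^{\otimes(n-q_m)}|\widehat m_0\rangle$, which renders each $l\geq 3$ block contribution a trigonometric polynomial in $k$ of degree at most $2n$, so Fourier inversion annihilates it whenever $|x_1-x_2|>2n$.
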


The essence of Lemma \ref{the:kindependent} is that we can restrict our attention to the states in the space corresponding to the projector $\int_{0}^{2\pi}|\tilde{k}\rangle\langle\tilde{k}|\otimes \Pi_{1,k}\, \dd k$ if we want to have an initial state which moves on the line further than distance $2n$, which  for algorithmic interests seems to be a reasonable request. The eigenvalues and eigenvectos of $H_n$ restricted to this subspace, namely $K_n$, is determined by the eigenvalues and eigenvectors of the block $\Pi_{1,k}H_{n,k}\Pi_{1,k}$.

Let $\{|\overline{1}\rangle,\ldots,|\overline{n+1}\rangle\}$ be a fixed set of $n+1$ orthonormal, $k$-independent vectors. Consider a linear $k$-dependent isometry
\begin{equation*}
U_{n,k}=\sum_{y=1}^{n}{|\widehat{2^{n-y}}_k\rangle\langle\overline{y}|+|\widehat{0}_k\rangle\langle\overline{n+1}|}.
\end{equation*}
We have $\Pi_{1,k}H_{n,k}\Pi_{1,k}=U_{n,k}\Phi_{n,k}U_{n,k}^*$, where
\begin{equation}
\label{eq:Phik}
\Phi_{n,k}
   = 2 \sum_{y=1}^{n-1}
              {(|\overline{y+1}\rangle\langle\overline{y}|+|\overline{y}\rangle\langle\overline{y+1}|)}
    + 2\sin k \,(|\overline{n+1}\rangle\langle\overline{n}|+|\overline{n}\rangle\langle\overline{n+1}|)
    + 4\cos k \, |\overline{n+1}\rangle\langle\overline{n+1}|,
\end{equation}
which is the adjacency matrix of the graph given in Figure \ref{fig:PhiGraph}.

\begin{figure}[htp]
\centering
\includegraphics[scale=1]{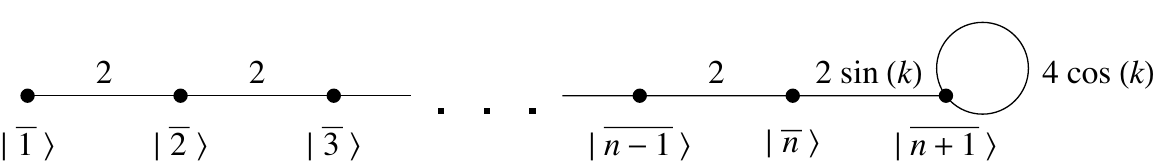}
\caption{The graph corresponding to the adjacency matrix $\Phi_{n,k}$.}
\label{fig:PhiGraph}
\end{figure}

If $k\equiv 0\mod\pi$, then $\sin k=0$ and thus $\Phi_{n,k}$ is the sum of two orthogonal operators $2\sum_{y=1}^{n-1}(|\overline{y+1}\rangle\langle \overline{y}|+|\overline{y}\rangle\langle \overline{y+1}|)$ and $4\cos k|\overline{n+1}\rangle\langle\overline{n+1}|$. The eigenspecturm of the former operator is $2$ times the eigenspectrum of the adjacency matrix of the line segment of length $n-1$, and the eigenspecturm of the latter is trivial. Therefore, in the case when $k\equiv 0\mod\pi$ we can get all the eigenvalues and eigenvectors of $\Phi_{n,k}$: for every $p\in\{\frac{\pi}{n+1},\frac{2\pi}{n+1},\ldots,\frac{n\pi}{n+1}\}$ the unit vector $\sqrt{\frac{2}{n+1}}\sum_{y=1}^{n}{\sin yp\,|\overline{y}\rangle}$ is an eigenvector of $\Phi_{n,k}$ with the corresponding eigenvalue $4\cos p$, and $|\overline{n+1}\rangle$ is an eigenvector corresponding to the eigenvalue $4\cos k$ (i.e., $\pm4$).

Unfortunately, in the general case we do not even know if there are closed form expressions for the eigenvalues of $\Phi_{n,k}$. However, the following lemma, which is proved in Appendix \ref{app:pcondition}, gives us some useful information about the eigenvalues and eigenvectors of $\Phi_{n,k}$ for an arbitrary $k$.

\begin{lemma} \label{the:pcondition}
Let us fix $n\in\N$ and $k\:\cancel{\equiv}\:0\!\!\mod\pi$. The equation
\begin{equation} \label{eq:pcondition}
  2(\cos p -\cos k)\sin ((n+1)p) = \sin^2 k \sin np
\end{equation}
has $n+1$ distinct solutions in the interval $(0,\pi)$, and, if $p$ is  a solution of this equation, then $\sum_{y=1}^n{\sin yp|\overline{y}\rangle}+\frac{\sin (n+1)p}{\sin k}|\overline{n+1}\rangle$ is an (unnormalized) eigenvector of $\Phi_{n,k}$ with the corresponding eigenvalue $4\cos p$.
\end{lemma}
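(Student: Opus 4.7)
The plan is to prove the lemma in two parts: first verify by direct substitution that the stated vector is an eigenvector with eigenvalue $4\cos p$ whenever $p$ satisfies \eq{pcondition}, and then show that \eq{pcondition} has exactly $n+1$ distinct solutions in $(0,\pi)$ via a lower bound coming from sign changes together with an upper bound coming from counting eigenvalues of $\Phi_{n,k}$.

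For the eigenvector claim, let $|\psi_p\rangle = \sum_{y=1}^{n}\sin(yp)\,|\overline{y}\rangle + \frac{\sin((n+1)p)}{\sin k}\,|\overline{n+1}\rangle$, which is well defined because $\sin k \neq 0$. I would apply $\Phi_{n,k}$ from \eq{Phik} coordinate by coordinate. For $y=1,\ldots,n-1$ the eigenvalue equation reduces to the sum-to-product identity $\sin((y-1)p)+\sin((y+1)p) = 2\cos p\,\sin(yp)$, with the convention $\sin 0 = 0$ handling the boundary at $y=1$. At $y=n$ the contribution from the neighbor $|\overline{n+1}\rangle$ is $2\sin k \cdot c_{n+1} = 2\sin((n+1)p)$, which is exactly what the same identity requires. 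At $y=n+1$ the equation becomes $2\sin k\,\sin(np) + 4\cos k\,c_{n+1} = 4\cos p\,c_{n+1}$; multiplying through by $\sin k$ and rearranging gives precisely \eq{pcondition}. Since $\sin p \neq 0$ for $p\in(0,\pi)$, the coefficient at $|\overline{1}\rangle$ is nonzero, so $|\psi_p\rangle$ is a genuine eigenvector.

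For the counting, define $f(p) = 2(\cos p - \cos k)\sin((n+1)p) - \sin^2 k\,\sin(np)$. At the grid points $p_j = j\pi/(n+1)$ for $j=1,\ldots,n$ one has $\sin((n+1)p_j)=0$, and using $\sin(jn\pi/(n+1)) = (-1)^{j+1}\sin(j\pi/(n+1))$ one obtains $f(p_j) = (-1)^j \sin^2 k\,\sin(j\pi/(n+1))$, which alternates in sign on $p_1,\ldots,p_n$; the intermediate value theorem then yields $n-1$ zeros in $(p_1,p_n)$. A Taylor expansion at $p=0$ gives $f(p) = p(1-\cos k)(n+2 - n\cos k) + O(p^3)$, strictly positive for small $p>0$, while $f(p_1)<0$, so one more zero lies in $(0,p_1)$. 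A parallel expansion at $p=\pi$ shows the leading sign of $f$ just left of $\pi$ is $(-1)^{n+1}$, opposite to the sign $(-1)^n$ of $f(p_n)$, giving a further zero in $(p_n,\pi)$. This accounts for $n+1$ distinct zeros in $(0,\pi)$.

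For the matching upper bound, note that $\cos$ is strictly monotone on $(0,\pi)$, so distinct solutions $p$ yield distinct eigenvalues $4\cos p$ of the Hermitian operator $\Phi_{n,k}$; since that operator acts on an $(n+1)$-dimensional space, it has at most $n+1$ distinct eigenvalues, forcing at most $n+1$ solutions. The main obstacle is the endpoint analysis near $p=0$ and $p=\pi$: one must use the hypothesis $k \not\equiv 0 \bmod \pi$, which ensures $\cos k \in (-1,1)$, to verify that both leading coefficients $(1-\cos k)(n+2-n\cos k)$ and $(1+\cos k)(n+2+n\cos k)$ are strictly positive, so that the sign comparison producing the two extra zeros actually goes through.
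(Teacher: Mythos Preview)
Your proof is correct. The eigenvector verification matches the paper's computation (the paper derives the coefficients $a_y=c\sin yp$ from the eigenvalue equation by induction, while you verify them by direct substitution; these are the same identities read in opposite directions).

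Where you genuinely diverge from the paper is in the counting argument. The paper proves the \emph{necessary} direction: it starts from an arbitrary eigenvector, shows its coefficients are forced to be $c\sin yp$, and obtains \eq{pcondition} as a constraint; since the eigenvector is then uniquely determined by $\lambda=4\cos p$, all $n+1$ eigenvalues of the Hermitian operator $\Phi_{n,k}$ are simple and each furnishes a solution, giving exactly $n+1$. You instead produce $n+1$ zeros of $f$ directly via the intermediate value theorem (alternating signs at $p_j=j\pi/(n+1)$ plus the endpoint expansions near $0$ and $\pi$) and then cap the count at $n+1$ using injectivity of $\cos$ on $(0,\pi)$ together with the dimension bound on the number of distinct eigenvalues. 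Your route is more hands-on but entirely elementary and avoids the paper's preliminary step of ruling out $|\lambda|=4$; the paper's route is shorter once one realizes the eigenvector is determined by its first coordinate, and as a bonus it establishes that \emph{every} eigenvector of $\Phi_{n,k}$ has the stated form, a fact the lemma does not literally ask for but which the paper uses elsewhere.
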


\begin{corollary}
The eigenvalues of $\Phi_{n,k}$ are all distinct for any value of $k$, and they all depend on the value of $k$.
\end{corollary}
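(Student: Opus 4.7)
The plan is to handle distinctness and $k$-dependence separately. For distinctness, I would split on whether $k\equiv 0\pmod{\pi}$. In the exceptional case, the discussion immediately preceding Lemma~\ref{the:pcondition} provides the spectrum explicitly: the $n$ values $4\cos(j\pi/(n+1))$ for $j=1,\ldots,n$, together with $4\cos k=\pm 4$ from the rank-one block. Since $\cos$ is strictly monotone on $(0,\pi)$, these $n$ values are pairwise distinct and all lie in $(-4,4)$, while $\pm 4$ lies strictly outside that interval, so all $n+1$ eigenvalues are distinct. In the generic case $k\not\equiv 0\pmod{\pi}$, Lemma~\ref{the:pcondition} provides $n+1$ distinct $p_1,\ldots,p_{n+1}\in(0,\pi)$ and corresponding eigenvectors with eigenvalues $4\cos p_j$; injectivity of $\cos$ on $(0,\pi)$ again makes these distinct, and since $\Phi_{n,k}$ acts on an $(n+1)$-dimensional space these account for the full spectrum.

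For the $k$-dependence claim I would argue by contradiction using equation \eqref{eq:pcondition}. Because $\Phi_{n,k}$ is a real-analytic Hermitian family with simple spectrum (from the previous part), standard perturbation theory yields locally real-analytic eigenvalue branches $\lambda_1(k),\ldots,\lambda_{n+1}(k)$. Suppose one branch $\lambda_j$ is constant on some open interval $I$ avoiding the exceptional set $\{k:\sin k=0\}$. Then $p_j(k)=\arccos(\lambda_j(k)/4)\in(0,\pi)$ is also constant on $I$, equal to some $p^\star$, and equation \eqref{eq:pcondition} holds for $p=p^\star$ and every $k\in I$.

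Writing $c=\cos k$ and $\sin^2 k=1-c^2$ transforms \eqref{eq:pcondition} into the polynomial identity
\begin{equation*}
\sin(np^\star)\,c^2 \;-\; 2\sin((n+1)p^\star)\,c \;+\; \bigl(2\cos p^\star\sin((n+1)p^\star)-\sin(np^\star)\bigr) \;=\; 0,
\end{equation*}
which must vanish on the infinite set $\cos(I)$. Hence all three coefficients vanish, forcing in particular $\sin(np^\star)=\sin((n+1)p^\star)=0$. Because $\gcd(n,n+1)=1$, this is only possible when $p^\star\in\pi\Z$, contradicting $p^\star\in(0,\pi)$. So no branch is locally constant, and by real-analyticity each $\lambda_j$ genuinely depends on $k$.

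The step I expect to require the most care is the analytic-branch bookkeeping across the exceptional points $k\equiv 0\pmod{\pi}$, where Lemma~\ref{the:pcondition} itself does not apply. This is overcome by invoking the simplicity of the spectrum (established in the first part) together with analyticity of $k\mapsto\Phi_{n,k}$, which, via Rellich's theorem, promotes the ordered eigenvalues to globally real-analytic functions of $k$; non-constancy on any single open sub-interval avoiding the exceptional points then propagates globally.
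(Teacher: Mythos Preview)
Your argument is correct. The paper states this corollary without any proof, treating both claims as immediate from Lemma~\ref{the:pcondition} together with the explicit spectrum for $k\equiv 0\pmod{\pi}$ described just before it. Your distinctness argument simply makes that reasoning explicit and matches it.

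For $k$-dependence, the paper actually gives no argument at the point where the corollary is stated; the stronger fact of strict monotonicity on $(0,\pi)$ and $(\pi,2\pi)$ is Theorem~\ref{thm:lambdaD}, proved later (and in the companion thesis) by a more detailed analysis of the $p$-equation. Your polynomial-coefficient argument---rewriting \eqref{eq:pcondition} as a quadratic in $\cos k$ that must vanish identically, forcing $\sin(np^\star)=\sin((n+1)p^\star)=0$ and hence $p^\star\in\pi\Z$---is a more elementary, self-contained route that avoids that later machinery entirely. The appeal to analyticity of the eigenvalue branches (via Lemma~\ref{lem:holo}/Rellich) to globalize past the exceptional points $k\equiv 0\pmod{\pi}$ is appropriate; the paper invokes the same analyticity framework immediately afterward anyway.
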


We refer to (\ref{eq:pcondition}) as the {\em $p$-equation}.

\subsection{Even {\em n} and the median eigenvalue}

We do not know how to solve the $p$-equation, the solution of which would give us the full eigenspectrum of $H_{n,k}$. Despite that, the $p$-equation allows us to obtain good approximations of eigenvalues and their derivatives. But let us first show that we can treat $k$-dependent eigenvalues of $H_{n,k}$ as infinitely differentiable functions of $k$.

A complex-valued function is {\em holomorphic} if it is complex-differentiable in a neighborhood of every point in its domain. Holomorphic functions are known to be infinitely differentiable, and the class of holomorphic functions include polynomials, the exponential function, sine and cosine \cite{Markushevich}. The notion of holomorphic functions can be generalized to vector-valued and operator-valued functions in an obvious way. From \cite[Chapter  II, \S6.2]{Kato} we have the following lemma

\begin{lemma}
\label{lem:holo}
Let $\cX$ be a finite complex Euclidean space of dimension $m$. Consider a holomorphic operator-valued function $T$ which maps complex numbers to linear operators over $\cX$ such that $T(k)$ is Hermitian for all $k\in\R$. Then for $k\in\R$ there exist a family of orthonormal basis $\{|\phi_l(k)\rangle\,:\,l\in[1\,..\,m]\}$ of $\cX$ consisting of eigenvectors of $T(k)$ and a family $\{\lambda_l(k)\,:\,l\in[1\,..\,m]\}$ consisting of eigenvalues of $T(k)$ such that $\lambda_l(k)$ and $|\phi_l(k)\rangle$ are holomorphic
functions of $k$ for all $l\in[1\,..\,m]$.
\end{lemma}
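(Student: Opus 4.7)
The plan is to derive Lemma \ref{lem:holo} as a finite-dimensional special case of Rellich's theorem for self-adjoint holomorphic families. The proof rests on two ingredients: a local analysis of the characteristic polynomial via Puiseux series, and the spectral projector formula $P(k) = \frac{1}{2\pi\ii}\oint_\Gamma (zI - T(k))^{-1}\,\dd z$ around a contour $\Gamma$ isolating part of the spectrum.

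First I would note that $\det(\lambda I - T(k))$ is a polynomial in $\lambda$ whose coefficients are holomorphic in $k$, since the entries of $T(k)$ are. By the standard theory of algebraic functions, near any point $k_0 \in \C$ each eigenvalue admits a Puiseux expansion $\lambda(k) = \lambda_0 + \sum_{j \ge 1} a_j (k - k_0)^{j/q}$ for some integer $q \ge 1$. The \emph{key step} uses Hermiticity: for real $k$ every eigenvalue must be real. If $q > 1$, then for $k$ real and close to $k_0$ the different branches obtained by rotating $(k - k_0)^{1/q}$ by the $q$-th roots of unity yield distinct but still real values of $\lambda(k)$; this forces all Puiseux coefficients $a_j$ with $j$ not divisible by $q$ to vanish, reducing the Puiseux series to an ordinary Taylor series. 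Hence each eigenvalue is holomorphic near every real $k_0$, and a continuity-plus-ordering argument produces a global labeling $\lambda_l(k)$ for $l \in [1\,..\,m]$.

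For the eigenvectors, I would use the resolvent around each real $k_0$. Grouping those eigenvalues that coincide at $k_0$ and enclosing them by a contour separating them from the remaining spectrum gives a holomorphic spectral projector. Within the group, the individual eigenvalues extend holomorphically by the previous paragraph and are distinct away from $k_0$; shrinking contours around each yields holomorphic rank-one projectors $P_l(k)$ near $k_0$. To select a holomorphic unit vector I would fix any $|\phi_l(k_0)\rangle$ in the range of $P_l(k_0)$ and set $|\phi_l(k)\rangle = P_l(k)|\phi_l(k_0)\rangle / \sqrt{\langle\phi_l(k_0)|P_l(k)|\phi_l(k_0)\rangle}$. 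The denominator is holomorphic and nonzero in a neighborhood of $k_0$ by continuity of $P_l$, and mutual orthogonality of the $|\phi_l(k)\rangle$ on the real axis follows from orthogonality of the projectors together with self-adjointness of $T(k)$.

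The main obstacle is exactly the crossing points: without the Hermiticity hypothesis, eigenvalues can exhibit genuine branch-point behavior (for instance $\lambda_\pm(k) = \pm\sqrt{k}$ arising from a non-Hermitian $2 \times 2$ matrix degenerating to a Jordan block at $k = 0$), which obstructs any holomorphic labeling. The Puiseux-parity argument above is precisely what rules this out in the self-adjoint case, and explains why Lemma \ref{lem:holo} requires $T(k)$ to be Hermitian on all of $\R$ rather than at a single point.
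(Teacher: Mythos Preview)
The paper does not prove Lemma~\ref{lem:holo} at all; it simply quotes it from Kato, \emph{A Short Introduction to Perturbation Theory for Linear Operators}, Chapter~II, \S6.2. Your sketch is essentially the Rellich--Kato argument that sits behind that citation, so in spirit you are reproducing exactly what the paper invokes. Two points in your write-up deserve tightening, however.

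First, the phrase ``continuity-plus-ordering argument produces a global labeling'' is misleading. Ordering the eigenvalues by size does \emph{not} give holomorphic (or even differentiable) functions at crossing points: if $\lambda_1(k)=k$ and $\lambda_2(k)=-k$ then the ordered pair is $(|k|,-|k|)$. The correct global labeling on $\R$ comes from analytic continuation of the local Puiseux/Taylor branches, using that $\R$ is simply connected; ordering plays no role.

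Second, the eigenvector construction at a crossing $k_0$ is more delicate than ``shrinking contours around each'' suggests. At $k_0$ the individual eigenvalues coincide, so individual rank-one contour projectors $P_l(k)$ are defined only on a punctured neighborhood, and one must argue that they extend holomorphically across $k_0$ (e.g.\ via the identity $(\lambda_l(k)-\lambda_{l'}(k))P_l(k) = (T(k)-\lambda_{l'}(k))P(k)$ restricted to the group projector $P(k)$, or via Kato's transformation function solving $U'(k)=[P'(k),P(k)]U(k)$). You also need to separately handle the case where two eigenvalue functions coincide identically, which your sketch does not mention. None of this is fatal---it is exactly what Kato's Chapter~II carries out---but as written your eigenvector paragraph assumes the hard part rather than proving it.
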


Let $\lambda_l(k)$ be the $l$-th largest eigenvalue of $\Phi_{n,k}$. Because all the eigenvalues of $\Phi_{n,k}$ are distinct and $\Phi_{n,k}$ is holomorphic in $k$ (which can be easily seen from (\ref{eq:Phik})), according to Lemma \ref{lem:holo}, $\lambda_l(k)$ is a holomorphic function in $k$ for all $l\in[1\,..\,n+1]$. In order to give some intuition about the eigenspectrum of $\Phi_{n,k}$, in Figure \ref{fig:eigenvalues} we show how the eigenvalues of $\Phi_{n,k}$ and their derivatives depend on $k\in[0,2\pi]$ in the case when $n=8$. These plots are obtained via numerical computation.

\begin{figure}[htp]
\centering
\includegraphics[scale=1]{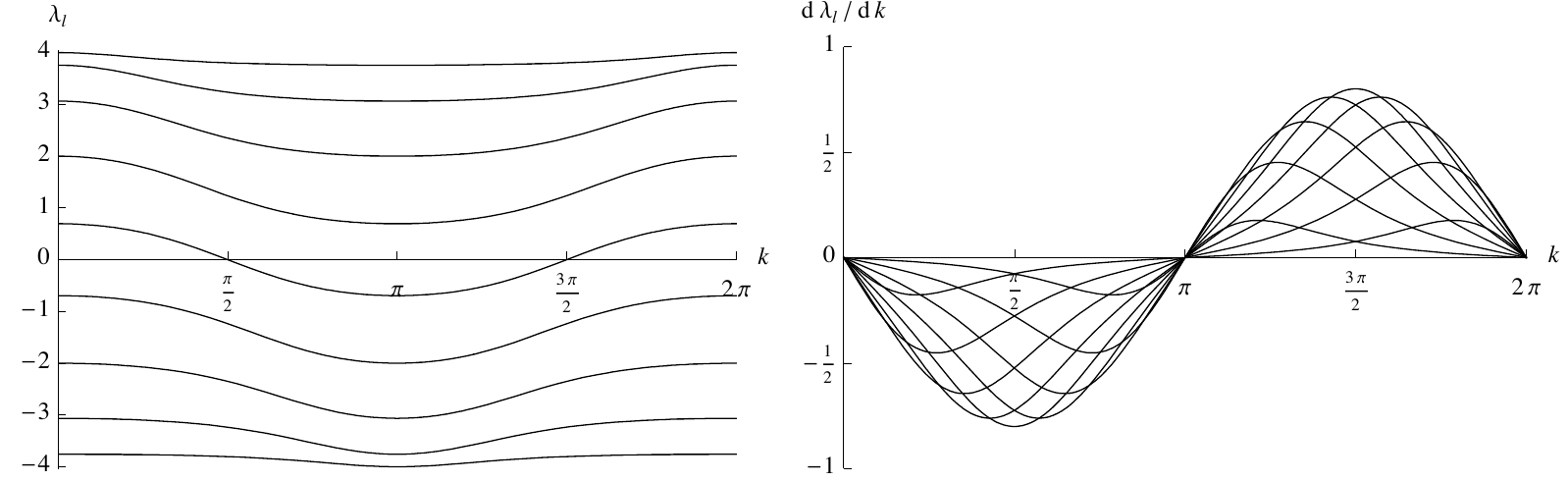}
\caption{The eigenvalues of $\Phi_{n,k}$ (on the left) and their derivatives (on the right) for $n=8$.}
\label{fig:eigenvalues}
\end{figure}

Figure \ref{fig:eigenvalues} suggests that the derivative of any eigenvalue is $0$ if and only if $k\equiv0\mod\pi$. The following theorem, which is proved in \cite[Section 3.2.1]{RosmanisThesis} using the $p$-equation, confirms this observation.

\begin{theorem}
\label{thm:lambdaD}
For every $l\in[1\,..\,n+1]$ the eigenvalue function $\lambda_l(k)$ is strictly decreasing in the interval $(0,\pi)$ and strictly increasing in the interval $(\pi,2\pi)$.
\end{theorem}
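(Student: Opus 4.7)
The plan is to combine a symmetry reduction with the Hellmann--Feynman theorem applied to $\Phi_{n,k}$.

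First, I observe that the $p$-equation \eq{pcondition} depends on $k$ only through $\cos k$ and $\sin^2 k$, both of which are invariant under $k \mapsto 2\pi-k$. Hence the set of solutions $\{p_l(k)\}$ and therefore the eigenvalues $\lambda_l(k) = 4\cos p_l(k)$ satisfy $\lambda_l(2\pi - k) = \lambda_l(k)$, so $\lambda_l'(k) = -\lambda_l'(2\pi-k)$. It therefore suffices to prove $\lambda_l'(k) < 0$ for every $k \in (0,\pi)$; strict increase on $(\pi,2\pi)$ then follows.

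Second, I compute $\lambda_l'(k)$ via Hellmann--Feynman. From \eq{Phik}, $\partial_k \Phi_{n,k}$ is supported only on the $|\overline{n}\rangle,|\overline{n+1}\rangle$ block and equals $2\cos k\,(|\overline{n+1}\rangle\langle\overline{n}|+|\overline{n}\rangle\langle\overline{n+1}|) - 4\sin k\,|\overline{n+1}\rangle\langle\overline{n+1}|$. Plugging in the unnormalized eigenvector $|\phi\rangle$ from Lemma~\ref{the:pcondition} (well-defined on $(0,\pi)$ since $\sin k\neq 0$) reduces the matrix element $\langle\phi|\partial_k\Phi_{n,k}|\phi\rangle$ to an expression involving only the last two amplitudes $\sin(np_l)$ and $\sin((n+1)p_l)/\sin k$, and after a short simplification
\begin{equation*}
\lambda_l'(k) \;=\; -\,\frac{4\,\sin((n+1)p_l)\,\bigl[\sin((n+1)p_l)-\cos k\,\sin(np_l)\bigr]}{\langle\phi|\phi\rangle\,\sin k}.
\end{equation*}
Since the denominator is positive on $(0,\pi)$, I need only show that the sign-bearing product in the numerator is strictly positive at every solution $p_l$ of the $p$-equation.

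Third, I eliminate $\sin((n+1)p_l)$ using \eq{pcondition}. When $\cos p_l \neq \cos k$, the $p$-equation gives $\sin((n+1)p_l) = \sin^2 k\,\sin(np_l)/[2(\cos p_l-\cos k)]$, and a short algebraic manipulation (using the identity $1+\cos^2 k - 2\cos k\cos p_l = \sin^2 p_l + (\cos k-\cos p_l)^2$) reduces the bracket $\sin((n+1)p_l)-\cos k\sin(np_l)$ to $\sin(np_l)[\sin^2 p_l+(\cos k-\cos p_l)^2]/[2(\cos p_l-\cos k)]$. The product is then manifestly non-negative and is strictly positive provided $\sin(np_l)\neq 0$. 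If instead $\sin(np_l)=0$ for some $p_l\in(0,\pi)$, then $p_l=m\pi/n$ with $m\in\{1,\dots,n-1\}$, and $\sin((n+1)p_l)=(-1)^m\sin(m\pi/n)\neq 0$; combined with $\cos p_l\neq\cos k$ this contradicts the $p$-equation. The complementary case $\cos p_l=\cos k$ (i.e., $p_l=k$) forces $\sin(np_l)=0$ by the $p$-equation, so $p_l=m\pi/n$, and the product collapses directly to $\sin^2((n+1)p_l)=\sin^2 p_l>0$.

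The main obstacle will be the algebraic rearrangement in the third step --- particularly isolating the factor $\sin^2 p_l+(\cos k-\cos p_l)^2$, which makes positivity transparent --- together with the delicate case analysis at the degenerate solutions where $p_l=m\pi/n$. The Hellmann--Feynman computation and the symmetry reduction are routine by comparison.
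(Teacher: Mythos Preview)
Your argument is correct. The paper does not actually include a proof of this theorem; it defers to \cite[Section~3.2.1]{RosmanisThesis} and only notes that the proof uses the $p$-equation. Your approach is consistent with that indication: you extract $\lambda_l'(k)$ via Hellmann--Feynman (valid here since all eigenvalues of $\Phi_{n,k}$ are simple, by the corollary following \lem{the:pcondition}, and the formula $\lambda'=\langle\phi|\partial_k\Phi_{n,k}|\phi\rangle/\langle\phi|\phi\rangle$ holds for any nonzero eigenvector), and then use \eq{pcondition} to force the sign. The symmetry reduction $\lambda_l(2\pi-k)=\lambda_l(k)$ is legitimate either from the $p$-equation as you wrote or directly from \eq{Phik}, since $\Phi_{n,2\pi-k}$ is conjugate to $\Phi_{n,k}$ by the sign flip on $|\overline{n+1}\rangle$.

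The algebra in your third step checks out: substituting $\sin((n+1)p_l)=\sin^2k\,\sin(np_l)/[2(\cos p_l-\cos k)]$ into both factors gives
\[
\sin((n+1)p_l)\bigl[\sin((n+1)p_l)-\cos k\,\sin(np_l)\bigr]
=\frac{\sin^2k\,\sin^2(np_l)\,\bigl[\sin^2p_l+(\cos k-\cos p_l)^2\bigr]}{4(\cos p_l-\cos k)^2},
\]
which is manifestly nonnegative and vanishes only if $\sin(np_l)=0$; you correctly rule that out in this case via the $p$-equation. The residual case $p_l=k$ (forcing $k=m\pi/n$) is handled cleanly, with $\sin^2((n+1)p_l)=\sin^2p_l>0$. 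Nothing is missing.
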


For the rest of the section we consider only the case when $n$ is even because it makes some calculations easier. Also, out of all $n+1$ $k$-dependent eigenvalues of $H_{n,k}$, let $\lambda(k)$ be the median one, that is, the $\frac{n+2}{2}$-th largest; we consider only this particular eigenvalue. The reasons for considering this particular eigenvalue are the fact that it can be well approximated and it seems to have the largest derivative, which most likely is good for a construction of states moving fast on the line.

Note that the value of $\lambda(k)$ depends on $n$. Let $\Lambda(k)=4\arctan\left(\frac{2\cos k}{\sin^2 k}\right)$ and let us use a prime to denote derivatives with respect to $k$. In Appendix \ref{app:hyplem} we prove the following lemma.
\begin{lemma}
\label{lem:lambdaprime}
For every $n$, $\lambda'(k)$ is bounded between $\frac{\Lambda'(k)}{n}\left(1-\frac{2}{n}\right)$ and $\frac{\Lambda'(k)}{n}\left(1+\frac{2}{n}\right)$ and we have  $\left|\lambda''(k)-\frac{\Lambda''(k)}{n}\right|\in O\left(\frac{1}{n^2}\right)$. Also, there exists $n_0\in\N$ such that for all $n\geq n_0$ we have $\lambda''(k)=0$ if and only if $k\equiv\frac{\pi}{2}\mod\pi$.
\end{lemma}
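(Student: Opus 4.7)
My plan is to exploit the $p$-equation \eq{pcondition} to characterise $\lambda(k)$ implicitly. Using $\sin(np) = \sin((n+1)p)\cos p - \cos((n+1)p)\sin p$, I rewrite the $p$-equation as
\begin{equation*}
\cot((n+1)p) \;=\; h(p,k) \;:=\; \frac{2\cos k-(1+\cos^2 k)\cos p}{\sin^2 k\,\sin p}.
\end{equation*}
Setting $G(p,k):=\mathrm{arccot}\,h(p,k)\in(0,\pi)$, the $n+1$ solutions in $(0,\pi)$ are parameterised by $M\in\{0,\ldots,n\}$ via $(n+1)p = M\pi + G(p,k)$. At $k=\pi/2$ the $p$-equation reduces to $\sin((n+2)p) = 0$, and for $n$ even the middle root is $p=\pi/2$, identifying the median branch as $M=n/2$. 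Crucially $G(\pi/2,k) = \mathrm{arccot}(2\cos k/\sin^2 k) = \pi/2 - \Lambda(k)/4$, so writing $\epsilon(k):=p(k)-\pi/2$ the median branch is characterised by
\begin{equation*}
(n+1)\,\epsilon \;=\; G(\pi/2+\epsilon,k) - \pi/2.
\end{equation*}
Since $\partial_p G$ is uniformly bounded on compact subsets of $\{k\:\cancel{\equiv}\:0\!\!\mod\pi\}$, a contraction argument gives $\epsilon(k) = -\Lambda(k)/[4(n+1)] + O(1/n^2)$.

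Next I differentiate in $k$: $p'(k) = G_k/(n+1-G_p)$. Computing $h_p$ and $h_k$ directly, both share the factor $1+\cos^2 k - 2\cos k\cos p = (\cos k - \cos p)^2 + \sin^2 p$, and their ratio simplifies to the clean identity
\begin{equation*}
G_k(p,k) \;=\; -\frac{2\sin p}{\sin k}\,G_p(p,k).
\end{equation*}
Together with $\lambda(k) = 4\cos p(k)$, this yields
\begin{equation*}
\lambda'(k) \;=\; \frac{8\sin^{2}p}{\sin k}\cdot\frac{G_p(p,k)}{n+1-G_p(p,k)}.
\end{equation*}
A short calculation confirms $\Lambda'(k) = 8\,G_p(\pi/2,k)/\sin k$, so
\begin{equation*}
\frac{\lambda'(k)}{\Lambda'(k)/n} \;=\; \frac{n\,\sin^{2}p\,\bigl[G_p(p,k)/G_p(\pi/2,k)\bigr]}{n+1-G_p(p,k)}.
\end{equation*}
Since $G_p(\pi/2,k) = -(1+\cos^2 k)\sin^2 k/(\sin^4 k + 4\cos^2 k) \in [-1,0]$, the denominator lies in $[n+1,n+2]$; combined with $\sin^2 p = 1+O(1/n^2)$ and $G_p(p,k)/G_p(\pi/2,k) = 1+O(1/n)$ (from Taylor expansion in $\epsilon$), a careful accounting of constants produces the claimed interval for $\lambda'(k)$. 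Differentiating the formula for $\lambda'(k)$ once more and using $\epsilon,\epsilon'\in O(1/n)$ then gives $\lambda''(k) = \Lambda''(k)/n + O(1/n^2)$.

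For the ``if and only if'' claim I invoke symmetry. The substitution $(p,k)\mapsto(\pi-p,\pi-k)$ preserves the $p$-equation when $n$ is even and acts on eigenvalue indices by $l\mapsto n+2-l$, fixing the median index $l=n/2+1$; hence $\lambda(\pi-k) = 4\cos(\pi-p(k)) = -\lambda(k)$, so $\lambda''$ is antisymmetric about $\pi/2$ and vanishes there. Periodicity and $\lambda(-k)=\lambda(k)$ (from a unitary conjugation flipping the sign of $|\overline{n+1}\rangle$) extend the vanishing to all $k\equiv\pi/2\!\!\mod\pi$. Conversely, Taylor-expanding $\Lambda(\pi/2+\delta) = -8\delta+(8/3)\delta^{3}+O(\delta^{5})$ shows $\Lambda''$ has a simple zero at $\pi/2$, and an elementary analysis confirms $\Lambda''(k)\neq 0$ elsewhere on $(0,\pi)$; combined with $\lambda''(k) = \Lambda''(k)/n + O(1/n^{2})$, this rules out zeros away from $\pi/2$ for $n\geq n_{0}$, while the simple-zero structure near $\pi/2$ (inherited via the antisymmetry) precludes spurious zeros in its neighbourhood.

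The main obstacle is producing the explicit constant $2$ in the first bound: the leading-order estimate $\lambda'(k)\approx\Lambda'(k)/(n+1)$ is easy, but obtaining the tight interval $(1\pm 2/n)$ requires tracking the $n/(n+1-G_p)$ factor and the numerator's $\epsilon$-dependence together rather than separately.
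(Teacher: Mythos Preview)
Your overall strategy---implicit differentiation of the $p$-equation after isolating the median branch near $p=\pi/2$---is precisely the paper's strategy, and your identity $G_k=-\tfrac{2\sin p}{\sin k}G_p$ is a clean repackaging of the same computation. Where the two proofs diverge is in how the explicit constant $2$ is obtained, and this is where your argument has a genuine gap that you yourself flag.

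The paper does not rely on a Taylor expansion in $\epsilon$ with ``careful accounting''; instead it introduces $\theta(k)=(n+1)(\tfrac{\pi}{2}-p(k))$ and, using Theorem~\ref{thm:lambdaD}, establishes the \emph{a priori} bound $|\theta(k)|\le\tfrac{\pi}{2}$ for all $k$. It then derives the closed form
\[
\lambda'(k)\;=\;-\frac{16\sin k\,\cos^{2}\!\frac{\theta(k)}{n+1}}{\,n(3+\cos 2k)+4-4(n+1)\cos k\,\sin\frac{\theta(k)}{n+1}\,},
\]
and the constant $2$ drops out of the elementary inequalities $0\le 4(n+1)\cos k\,\sin\tfrac{\theta}{n+1}\le 2\pi$ and $1-\tfrac{\pi^{2}}{8(n+1)^{2}}\le\cos\tfrac{\theta}{n+1}\le 1$. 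In your formulation this corresponds to a uniform bound $G_{p}(p(k),k)\in[-1,0]$ for \emph{all} $k$, not just at $p=\pi/2$; you only state the latter. Without that uniform control (and an explicit bound on the $\epsilon$-dependence of $G_{p}(p,k)/G_{p}(\pi/2,k)$), the interval $1\pm\tfrac{2}{n}$ remains an assertion. Your contraction estimate $\epsilon=O(1/n)$ is also established only on compacta away from $k\equiv 0$, whereas the lemma claims the bound for every $k$; the paper sidesteps this by working directly with $\theta$ and its uniform range.

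For the ``only if'' clause your antisymmetry argument is a nice alternative to the paper's. The paper instead proves the companion bound $|\lambda^{(3)}(k)-\Lambda^{(3)}(k)/n|\le b/n^{2}$ and splits $[0,\tfrac{\pi}{2}]$ into a region $[0,\kappa]$ where $\Lambda''<0$ dominates and a region $[\kappa,\tfrac{\pi}{2}]$ where $\lambda^{(3)}>0$ forces $\lambda''$ to be monotone with its unique zero at $\tfrac{\pi}{2}$. Your route via the simple zero of $\Lambda''$ works provided the $O(1/n^{2})$ error for $\lambda''$ is uniform on all of $(0,\pi)$---again something the paper's explicit formula delivers but your Taylor-in-$\epsilon$ framework does not obviously give near $k=0,\pi$.
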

\begin{corollary}
The value of $\lambda(k)$ is bounded between $\frac{\Lambda(k)}{n}\left(1-\frac{2}{n}\right)$ and $\frac{\Lambda(k)}{n}\left(1+\frac{2}{n}\right)$.
\end{corollary}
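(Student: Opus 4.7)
The plan is to integrate the derivative bound of \lem{lambdaprime} starting from a reference point $k_0$ at which both $\lambda(k_0)=0$ and $\Lambda(k_0)=0$, so that the bound passes cleanly from derivatives to values with no integration constant.

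First I would identify $k_0 = \pi/2$ as such a point. Substituting $k=\pi/2$ into \eq{Phik} gives $\sin k = 1$ and $\cos k = 0$, so $\Phi_{n,\pi/2}$ equals $2$ times the adjacency matrix of the path on $n+1$ vertices, whose spectrum is the well-known set $\{4\cos(j\pi/(n+2)):j\in[1\,..\,n+1]\}$. Because $n$ is even, $n+1$ is odd and the median (the $\frac{n+2}{2}$-th largest) eigenvalue corresponds to $j=\frac{n}{2}+1$, giving $4\cos(\pi/2)=0$; hence $\lambda(\pi/2)=0$. Directly, $\Lambda(\pi/2)=4\arctan 0 = 0$ as well.

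Next I would check that $\Lambda'(k)$ is sign-definite on $(0,\pi)$ so that the derivative bound has a consistent orientation. A short computation gives
\[
  \frac{\dd}{\dd k}\!\left(\frac{2\cos k}{\sin^2 k}\right) \;=\; -\,\frac{2(1+\cos^2 k)}{\sin^3 k},
\]
which is strictly negative on $(0,\pi)$; since $\arctan$ is increasing it follows that $\Lambda'(k)<0$ throughout this interval. Consequently the bound of \lem{lambdaprime} may be rewritten as the ordered chain
\[
  \frac{\Lambda'(s)}{n}\!\left(1+\tfrac{2}{n}\right) \;\leq\; \lambda'(s) \;\leq\; \frac{\Lambda'(s)}{n}\!\left(1-\tfrac{2}{n}\right)
\]
for $s\in(0,\pi)$. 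Integrating from $\pi/2$ to any $k\in(0,\pi)$ and using $\lambda(\pi/2)=\Lambda(\pi/2)=0$ produces exactly the claimed enclosure of $\lambda(k)$ between $\frac{\Lambda(k)}{n}(1-\frac{2}{n})$ and $\frac{\Lambda(k)}{n}(1+\frac{2}{n})$. For $k<\pi/2$ the direction of integration reverses the inequalities, but $\Lambda(k)$ simultaneously changes sign relative to $k>\pi/2$, so the two-sided enclosure remains valid on both sides of $\pi/2$. The range $(\pi,2\pi)$ is then handled by the symmetric argument around $k_0=3\pi/2$, which by \thm{lambdaD} plays the analogous role.

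I do not anticipate a serious obstacle: the entire substance of the proof lies in locating a reference point where $\lambda$ and $\Lambda$ simultaneously vanish. Once $k_0=\pi/2$ is chosen, the signs of $\Lambda'$ and $\Lambda$ align in just the right way so that integration of the derivative estimate delivers the value bound without any loss, and the only real work is the small verification that the median eigenvalue of the path of length $n+1$ is zero when $n$ is even.
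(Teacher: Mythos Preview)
Your proposal is correct and is precisely the natural derivation that the paper leaves implicit: the corollary is stated without proof immediately after \lem{lambdaprime}, and integrating the derivative bound from a common zero of $\lambda$ and $\Lambda$ is the intended one-line argument. Your identification of $k_0=\pi/2$ (and $3\pi/2$) as that common zero, via the explicit spectrum of the path on $n+1$ vertices, is exactly right; the paper records the same symmetry $\lambda(\tfrac{\pi}{2}+\kappa)=-\lambda(\tfrac{\pi}{2}-\kappa)$ in the proof of \lem{lambdaprime}, which is another way to see $\lambda(\pi/2)=0$.
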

As $n$ increases, the function $n\lambda(k)$ and its derivatives $n\lambda'(k)$ and $n\lambda''(k)$ converge to the functions  $\Lambda(k)$, $\Lambda'(k)$ and $\Lambda''(k)$, respectively. In Figure \ref{fig:arctan} we plot $\Lambda(k)$ and its first two derivatives to get some intuition about the behaviour of this `eigenvalue' function.

\begin{figure}[htp]
\centering
\includegraphics[scale=0.7]{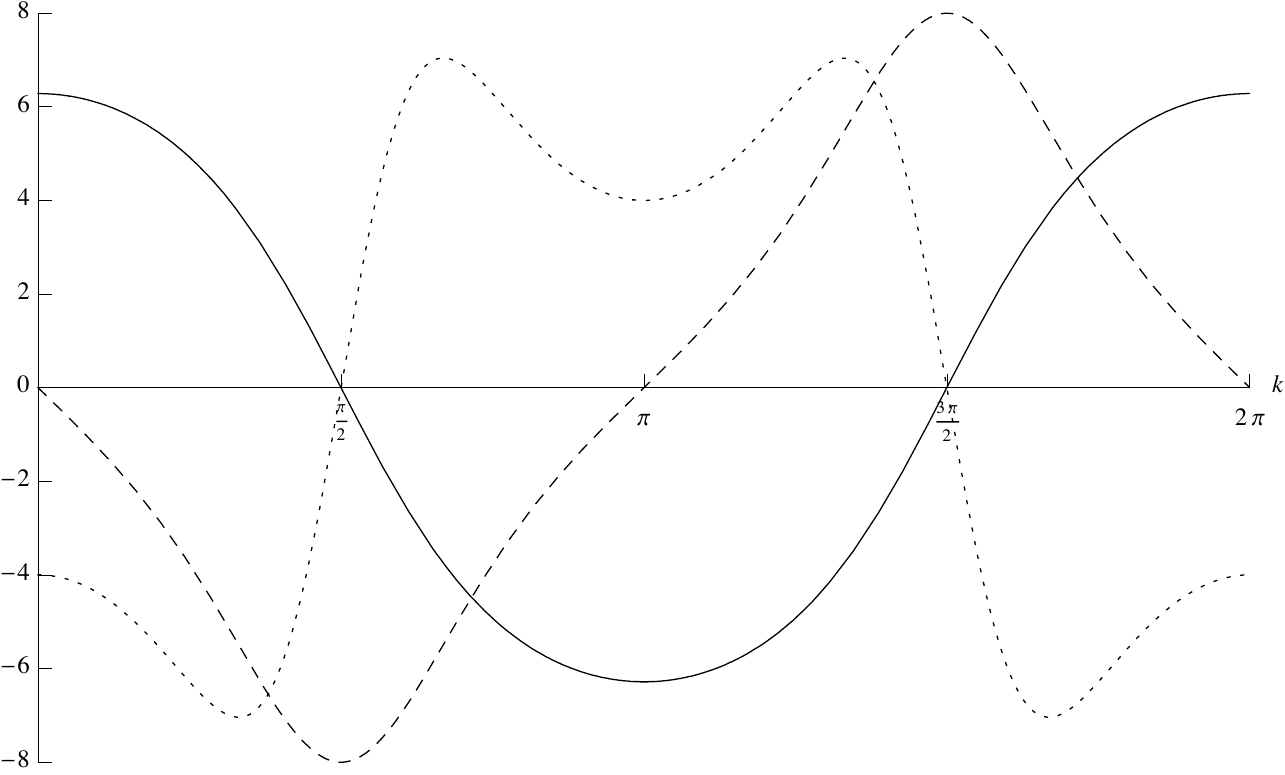}
\caption{$\Lambda(k)=4\arctan\left(\frac{2\cos k}{\sin^2 k}\right)$ (solid line) and its first and second derivatives (dashed and dotted lines, respectively).}
\label{fig:arctan}
\end{figure}

\subsection{A localized superposition of `median' eigenvectors evolving as wave packets} \label{ssec:linewave}

Now that we have established some properties of the median eigenvalue $\lambda(k)$, let us consider a quantum snake walk on the line having an initial state $|\eta_0\rangle$ which is a uniform superposition of eigenvectors of $H_n$ corresponding to $\lambda(k)$. We will choose the phases of these eigenvectors in a particular manner so that $|\eta_0\rangle$ seems to be highly localized. Then we will show that, under one reasonable assumption, for $n$ large enough the state $|\eta_0\rangle$ evolves as a left-moving and a right-moving wave packet, each propagating with the same constant momentum. But let us first give a precise definition of $|\eta_0\rangle$.

Lemma \ref{the:pcondition} implies that there exists a unique vector-valued function $|\phi(\cdot)\rangle:\R\rightarrow\R^{n+1}$ such that  $\Phi_{n,k}|\phi(k)\rangle=\lambda(k)|\phi(k)\rangle$, $\langle\phi(k)|\phi(k)\rangle=1$ and  $\langle\overline{1}|\phi(k)\rangle>0$ for all $k\in\R$. And Lemma $\ref{lem:holo}$ implies that $|\phi(k)\rangle$ is continuous in $k$, and therefore so is $|\psi(k)\rangle=U_{n,k}|\phi(k)\rangle$. Hence, $|\tilde{k}\rangle\otimes|\psi(k)\rangle$ is an eigenvector of $H_n$ corresponding to the eigenvalue $\lambda(k)$. For every $x\in\Z$ define
\begin{equation}
\label{eq:defeta}
|\eta_x\rangle = \frac{1}{\sqrt{2\pi}}
\int_{0}^{2\pi}
e^{-\ii k x}|\tilde{k}\rangle\otimes|\psi(k)\rangle\,\dd k.
\end{equation}
Notice that $|\eta_x\rangle=\sum_{y\in\Z}|y+x\rangle\langle y|\cdot|\eta_0\rangle$.
The state $|\eta_0\rangle$ obviously belongs to the subspace $\cP=\spn(\,\{\,|\tilde{k}\rangle\otimes|\psi(k)\rangle\,:\,k\in[0,2\pi]\,\}\,)$ and remains in $\cP$ under the evolution of $H_n$. Because $\langle\eta_x|\eta_y\rangle=\delta_{x,y}$, where $\delta$ is the Kronecker delta function, and
\begin{equation*}
\frac{1}{\sqrt{2\pi}}\sum_{x\in\Z}e^{\ii lx}|\eta_x\rangle
  =
\int_{0}^{2\pi}
\left( \frac{1}{2\pi} \sum_{x\in\Z} e^{\ii(l-k)x} \right)
|\tilde{k}\rangle\otimes|\psi(k)\rangle\,\dd k 
   = 
   |\tilde{l}\rangle\otimes|\psi(l)\rangle
\end{equation*}
for any $l\in\R$, $\{|\eta_x\rangle:x\in\Z\}$ is an orthonormal basis of $\cP$.

Numerical results suggest that $|\eta_x\rangle$ is highly localized `around the value $x$ on the $X$-axis' (see Appendix \ref{app:linenum}). For example, for $n=14$, if we measure $|\eta_x\rangle$ in the standard basis, with probability $0.62$ we obtain a snake starting at the position $x\pm1$, with probability $0.26$ a snake starting at $x\pm3$, with $0.09$ a snake starting at $x\pm5$, and so on (for even $y$ we never obtain a snake starting at position $x+y$). Also, the probability of obtaining a snake starting at the position at least $n$ units away from $x$ seems to decrease exponentially with $n$, and the following hypothesis seems to hold.

\begin{hypothesis}
\label{hyp:local}
For every $y\in\Z$ and $j\in\{0,1\}^n$ let $a_{y,j}=\langle y,j|\eta_x\rangle$. Then for every polynomial $r(n)$ we have
\begin{equation*}
\sum_{y=-\infty}^{x-n}\sum_{j\in\{0,1\}^n}|\alpha_{y,j}|+\sum_{y=x+n}^{+\infty}\sum_{j\in\{0,1\}^n}|\alpha_{y,j}|\in\o(1/r(n)).
\end{equation*}
\end{hypothesis}

Basically, Hypothesis \ref{hyp:local} implies that for any $|\chi\rangle\in\cP$, if $|\langle\eta_x|\chi\rangle|$ is large, then for some $y\in[x-n+1\,..\,x+n-1]$ the state $|\chi\rangle$ has large overlap with the space of snakes starting at the position $y$; meanwhile, if $|\langle\eta_y|\chi\rangle|$ is small for all $y\in[x-n+1\,..\,x+n-1]$, then the state $|\chi\rangle$ has small overlap with the space of snakes starting at the position $x$. Therefore, the inner products between $|\chi\rangle$ and vectors of the basis $\{|\eta_x\rangle:x\in\Z\}$ allow us to specify where on the line the state $|\chi\rangle$ is located.

Let us analyze how the state $|\eta_0\rangle$ evolves under the Hamiltonian $H_n$ in the limit of an asymptotically large time of the evolution $t$. That is, for every $\omega\in\R$ let us consider $\langle \eta_{\omega t}|e^{-\ii H_n t}|\eta_0\rangle$ for all $t\geq0$ such that $\omega t\in\Z$, and let us think of $\omega$ as the momentum. We have the following:
\begin{equation}
\label{eq:xiz1xiz2Int}
\langle\eta_{\omega t}|e^{-\ii H_n t}|\eta_0\rangle
=  \frac{1}{2\pi}
  \int_{0}^{2\pi}
  e^{\ii t (\omega k - \lambda(k))}\,\dd k.
\end{equation}
Let $f_\omega(k)=\omega k - \lambda(k)$. When $t$ is large, $e^{\ii t(\omega k - \lambda(k))}=e^{\ii t f_\omega(k)}$ rapidly changes as $f_\omega(k)$ changes. This means that in an interval where $f_\omega(k)$ changes the contribution from adjacent subintervals to the integral (\ref{eq:xiz1xiz2Int}) nearly cancels out \cite{NVv1}. Therefore the most contribution to the integral comes from values of $k$ where the function $f_\omega(k)$ is stationary, that is, where $f_\omega'(k)=0$. The method of stationary phase approximation makes this statement more rigorous, and we will use it to estimate the integral  (\ref{eq:xiz1xiz2Int}) in the limit of asymptotically large $t$.

From the $p$-equation one can get that $\lambda'(\frac{\pi}{2})=-\frac{8}{n+2}$ and $\lambda'(\frac{3\pi}{2})=\frac{8}{n+2}$, $\lambda^{(3)}(\frac{\pi}{2})=\frac{8(3n^2+4)}{(n+2)^3}$ and $\lambda^{(3)}(\frac{3\pi}{2})=-\frac{8(3n^2+4)}{(n+2)^3}$. Let us consider $n\geq n_0$, where $n_0$ is the same as in Lemma \ref{lem:lambdaprime}. Lemma \ref{lem:lambdaprime} implies that the derivative $\lambda'(k)$ is strictly increasing in the interval $(\frac{\pi}{2},\frac{3\pi}{2})$. For each $\omega\in(-\frac{8}{n+2},\frac{8}{n+2})$ let $k_\omega$ be the unique point in the interval $(\frac{\pi}{2},\frac{3\pi}{2})$ such that $f_\omega(k_\omega)=0$. Notice that for $\omega$ such that $|\omega|>\frac{8}{n+2}$ the function $f_\omega(k)$ has no stationary points. The method of stationary phase approximation (see \cite{AmbainisLNotes,NVv1}) applied to the integral (\ref{eq:xiz1xiz2Int}) gives us that, in the limit of asymptotically large $t$,
\begin{itemize}
  \item if $|\omega|>\frac{8}{n+2}$, then
    \begin{equation*}
      \langle\eta_{\omega t}|e^{-\ii H_n t}|\eta_0\rangle
   \in o\left(\frac{1}{r(t)}\right)
    \end{equation*}
    for every polynomial $r(t)$;
    \item if $\omega=\pm\frac{8}{n+2}$, then
    \begin{equation*}
      \langle\eta_{\pm\frac{8}{n+2} t}|e^{-\ii H_n t}|\eta_0\rangle
   \approx
    \ii^{-\frac{8}{n+2}t}\frac{(n+2)\Gamma(1/3)}{2\pi\sqrt[3]{4\sqrt{3}(3n^2+4)}} \cdot \frac{1}{\sqrt[3]{t}};
    \end{equation*}
    \item if $\omega\in(-\frac{8}{n+2},\frac{8}{n+2})$, then 
     \begin{equation*}
      \langle\eta_{\omega t}|e^{-\ii H_n t}|\eta_0\rangle
   \approx
   \ii^{\omega t}\sqrt{\frac{2}{\pi}}\cos\left(
   t\left(\omega\left(k_\omega-\frac{\pi}{2}\right)-\lambda(k_\omega)\right)-\frac{\pi}{4}
   \right) \cdot \frac{1}{\sqrt{t\;|\lambda''(k_\omega)|}}.
    \end{equation*}
\end{itemize}
As we can see, if we measure $e^{-\ii H_n t}|\eta_0\rangle$ in the basis $\{|\eta_{\omega t}\rangle:\omega t\in\Z\}$, most likely we will obtain a state $|\eta_{\omega t}\rangle$ such that $\lambda''(k_\omega)$ is close to $0$. According to Lemma \ref{lem:lambdaprime}, $\lambda''(k_\omega)$ is close to $0$ when $k_\omega$ is close to $\frac{\pi}{2}$ and $\frac{3\pi}{2}$ (see Figure \ref{fig:arctan}) and therefore $\omega$, the momentum, is close to $\pm\frac{8}{n+2}$. Hence, if indeed the state $|\eta_{\omega t}\rangle$ is localized around the position $\omega t$ on the $X$-axis, as Hypothesis \ref{hyp:local} asserts, then we can see that for $n\geq n_0$ the state $|\eta_0\rangle$ evolves as a left-moving and a right-moving wave packet each propagating with momentum $\frac{8}{n+2}$. In Appendix \ref{app:linenum} we present numerical data that suggest that this type of evolution can be already observed for small values of the time $t$ and length $n$.

\section{Quantum snake walk on the glued trees graph} \label{sec:GTG}

In this section we discuss how the continuous-time quantum snake walk might potentially lead to an efficient algorithm solving the extended glued trees problem, which we defined in Section \ref{ssec:extdef}. Our aim is to come up with an algorithm which makes $O(\poly N)$ oracle queries, where $N$ is the height of the glued trees graph and, without loss of generality, we assume each vertex is encoded using $\Theta(N)$ bit label.

\subsection{The main idea} \label{ssec:algIdea}

Let us consider the following expansion of the glued trees graph. Suppose $G$ is a glued trees graph of height $N$, and $r_1$ and $r_2$ are its roots. For $M>N$ we construct an {\em expanded glued trees graph} $G^M$ as follows: take $2^{M-N}$ instances of graph $G$, create two complete binary trees $T_1$ and $T_2$ each of height $M-N$, and then make bijective associations between leaves of $T_1$ and instances of $r_1$ and between leaves of $T_2$ and instances of $r_2$ to obtain $G^M$. An example of the graph $G^4$ for a given glued trees graph $G$ of height $2$ is given in Figure \ref{fig:egtg}. Basically, $G^M$ looks exactly like a glued trees graph of height $M$, but instead of one cycle gluing the trees, now we have $2^{M-N}$ cycles.

For the rest of the paper we will focus on the continuous-time quantum snake walk on $G^M$. Given an oracle for $G$, since we know an efficient algorithm finding the label of $r_2$ \cite{CCDFGSart}, we can efficiently simulate the quantum snake walk on $G^M$ for $M\in O(\poly N)$ and the length of the snake $n\in O(\poly N)$ (see \cite[Section 4.1.3]{RosmanisThesis}). The sketch of an algorithm potentially solving the extended glued trees problem would be as follows:
\begin{enumerate}
  \item choose $M,n\in O(\poly N)$ such that $n\geq 2N+1$;
  \item \label{s:alg2} construct a particular initial superposition for a quantum snake walk on $G^M$ which contains only snakes on $T_1$ (construction of any such state involves no oracle queries at all);
  \item run a quantum snake walk on $G^M$ for a certain amount of time $t\in O(\poly N)$ and measure, thus obtaining a snake of length $n$;
  \item if the obtained snake contains two vertices such that one belongs to $T_1$ and other to $T_2$, then the snake has to go completely through one instance of $G$, and from this snake we can extract a path connecting $r_1$ and $r_2$; otherwise return to Step \ref{s:alg2}.
\end{enumerate}

\begin{figure}[htp]
\centering
\includegraphics[scale=0.9]{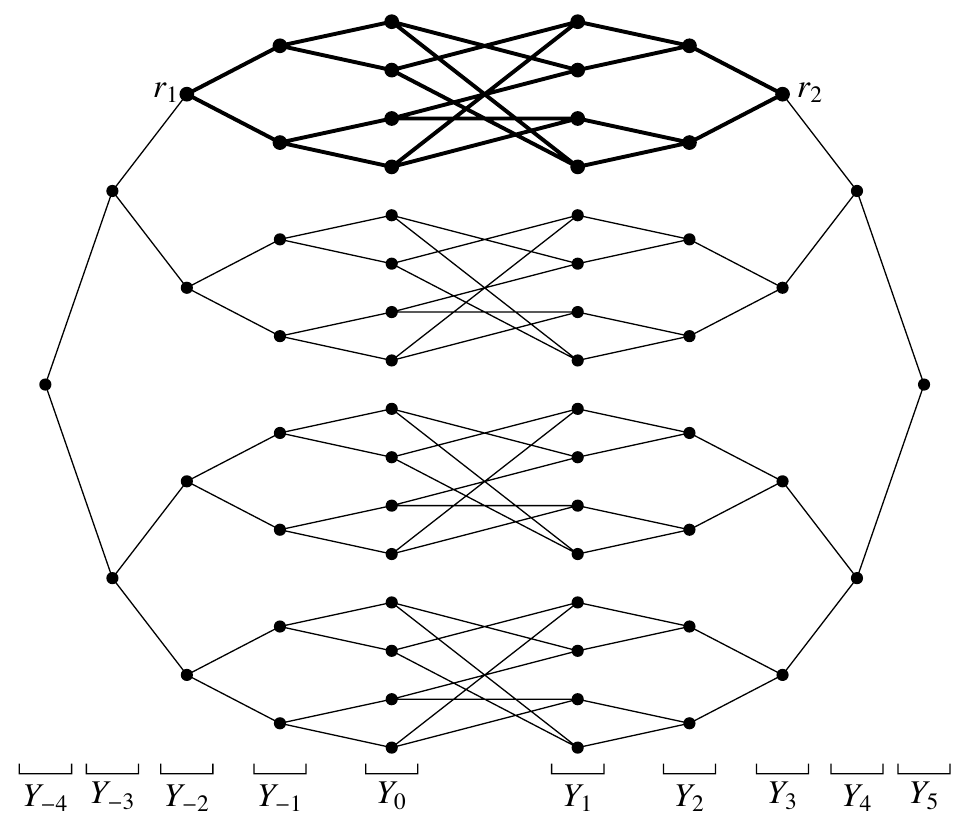}
\caption{An expanded glued trees graph $G^M$ for $N=2$ and $M=4$. One instance of the original glued trees graph $G$ is highlighted using thick edges.}
\label{fig:egtg}
\end{figure}

First let us show that the analysis of the quantum snake walk on this relatively complex graph becomes much easier if we restrict the choice of initial state to certain superpositions. For $x\in[-M\,..\,M+1]$ let $Y_x$ be the set of all vertices in $G^M$ which are at distance $M+x$ from the root of $T_1$ (see Figure \ref{fig:egtg}). Now, for $x\in[-M\,..\,M+1]$ and $j\in\{0,1\}^n$ let $S(x,j)$ be the set of all snakes $s=(v_0,\ldots,v_n)$ such that $v_0\in Y_x$ and for all $l\in[1\,..\,n]$, if $v_{l-1}\in Y_z$, then $v_{l}\in Y_{z-(-1)^{j_l}}$. For $S(x,j)\neq\emptyset$ let $|x,j\rangle$ be the uniform superposition over the elements of $S(x,j)$, and let us refer to $\spn\{|x,j\rangle\,:\,S(x,j)\neq\emptyset\}$ as the {\em column subspace}. Due to symmetry, the column subspace is invariant under $A_n(G^M)$, the Hamiltonian governing the quantum snake walk on $G^M$, and the action of $A_n(G^M)$ on the column subspace does not depend on which cycle in particular glues the two binary trees in $G$. Also, let us call the subgraph of $G^M$ having the set of vertices $Y_0\cup Y_1$ and all the edges originally connecting $Y_0$ and $Y_1$ the {\em glued part}.

From now on let us consider $M$ such that $M\gg n$. For $|x|\ll M$ and $j\in\{0,1\}^n$,  if we run the quantum snake walk on $G^M$ starting from the state $|x,j\rangle$ for time small enough, then we will never get close to the roots of $G^M$. Thus, in such a case the walk behaves almost exactly the same way on all $G^{M'}$ having $M'>M$, and for the sake of simpler analysis we assume $M=\infty$ (similarly as the regular quantum walk on a finite line segment, assuming the initial state is chosen to be far from the endpoints of the segment, can be approximated by the walk on the infinite line). Therefore the Hamiltonian governing the quantum snake walk on $G^M$ restricted to the column subspace effectively is
\begin{align}
   \tilde{H}_n=\!\!\sum_{j\in\{0,1\}^{n-1}}\!
   \Bigg(
   \sum_{x=-\infty}^{-\max(|j|_\pm,0)}
   &\Big(
   2\,(\,|x+1,0j\rangle\langle x,j1|+|x,j1\rangle\langle x+1,0j|\,) \notag
\\ &
   +  (\,|x-1,1j\rangle\langle x,j0|+|x,j0\rangle\langle x-1,1j|\,) \label{eqp:HHat}
\\ &
   + \sqrt{2}\,(\,|x+1,0j\rangle\langle x,j0|+|x,j0\rangle\langle x+1,0j|\,) \notag
\\ &
   + \sqrt{2}\,(\,|x-1,1j\rangle\langle x,j1|+|x,j1\rangle\langle x-1,1j|\,) \notag
   \Big)
\\
   + \sum_{x=1-\max(|j|_\pm,0)}^{-\min(|j|_\pm,0)}
   &\Big(
     \sqrt{2}\,(\,|x+1,0j\rangle\langle x,j1|+|x,j1\rangle\langle x+1,0j|\,) \notag
\\ &
   + \sqrt{2}\,(\,|x-1,1j\rangle\langle x,j0|+|x,j0\rangle\langle x-1,1j|\,)
\\ &
   + \text{If}_{|j|_\pm>0}[2;1]\,(\,|x+1,0j\rangle\langle x,j0|+|x,j0\rangle\langle x+1,0j|\,) \notag
\\ &
   + \text{If}_{|j|_\pm>0}[1;2]\,(\,|x-1,1j\rangle\langle x,j1|+|x,j1\rangle\langle x-1,1j|\,) \notag
   \Big)
\\
   + \sum_{x= 1-\min(|j|_\pm,0)}^{+\infty}
   &\Big(
        (\,|x+1,0j\rangle\langle x,j1|+|x,j1\rangle\langle x+1,0j|\,) \notag
\\ &
   + 2\,(\,|x-1,1j\rangle\langle x,j0|+|x,j0\rangle\langle x-1,1j|\,)
\\ &
   + \sqrt{2}\,(\,|x+1,0j\rangle\langle x,j0|+|x,j0\rangle\langle x+1,0j|\,) \notag
\\ &
   + \sqrt{2}\,(\,|x-1,1j\rangle\langle x,j1|+|x,j1\rangle\langle x-1,1j|\,) \notag
   \Big)
   \Bigg),   
\end{align}
where $|j|_\pm=-\sum_{y=1}^{n-1}(-1)^{j_y}$ and $\text{If}_{|j|_\pm>0}[a_1;a_2]$ is $a_1$ if $|j|_\pm>0$, and $a_2$ otherwise (see \cite[Section 4.2]{RosmanisThesis} for details).

\subsection{Quantum snake walk on the infinite binary tree}

Here let us consider the continuous-time quantum snake walk on $G^M$ restricted to superpositions $|x,j\rangle$ such that $-M\ll x\ll -n$. We care only about the first sum (\ref{eqp:HHat}) in the expression for $\tilde{H}_n$, and, because $x\ll-n$ , similarly as before, we can approximate an action of $\tilde{H}_n$ by one of
\begin{equation*}
\hat{H}_n = \int_0^{2\pi}{|\tilde{k}\rangle\langle\tilde{k}|\otimes \hat{H}_{n,k}\,\dd k},
\end{equation*}
where
\begin{equation*}
\begin{split} 
  \hat{H}_{n,k} \; = \; & \sum_{j\in\{0,1\}^{n-1}} {e^{\ii k}\;(\,\sqrt{2}|j0\rangle\langle 0j|+2|j1\rangle\langle 0j|+|1j\rangle\langle j0|+\sqrt{2}|1j\rangle\langle j1| \,)} \\
             +    & \sum_{j\in\{0,1\}^{n-1}} {e^{-\ii k} (\,\sqrt{2}|0j\rangle\langle j0|+2|0j\rangle\langle j1|+|j0\rangle\langle 1j|+\sqrt{2}|j1\rangle\langle 1j| \,)}.
\end{split}
\end{equation*}
The analysis of Hamiltonian $\hat{H}_n$ and  the quantum evolution governed by it will highly resemble the analysis of $H_n$ above. 
Let us redefine $|u_{0,k}\rangle=\frac{1}{\sqrt{3}}(\sqrt{2}e^{-\ii k}|0\rangle+e^{\ii k}|1\rangle)$, $|u_{1,k}\rangle=\frac{1}{\sqrt{3}}(e^{-\ii k}|0\rangle-\sqrt{2}e^{\ii k}|1\rangle)$, $|v_0\rangle=\frac{1}{\sqrt{3}}(|0\rangle+\sqrt{2}|1\rangle)$ and $|v_1\rangle=\frac{1}{\sqrt{3}}(\sqrt{2}|0\rangle-|1\rangle)$, and let us define the orthonormal basis $\{ |\widehat{0}_k\rangle, \ldots, |\widehat{2^n-1}_k\rangle \}$ as before except that now we are using these new definitions of $|u_{0,k}\rangle$, $|u_{1,k}\rangle$, $|v_0\rangle$ and $|v_1\rangle$ to define  $|\widehat{m}_k\rangle$ for each  $m\in[0\,..\,2^n-1]$. In this new basis
\begin{equation*}
\begin{split} 
  \hat{H}_{n,k}\;=\; 
  4\sqrt{2} \cos k \, |\widehat{0}_k\rangle\langle\widehat{0}_k|\;+ \;
   (3 \sin k +\ii \cos k)\,|\widehat{1}_k\rangle\langle\widehat{0}_k| & \;+\;
   (3 \sin k -\ii \cos k)\, |\widehat{0}_k\rangle\langle\widehat{1}_k| \\
   +  \;& 3 \sum_{m=1}^{2^{n-1}-1}{(|\widehat{2m}_k\rangle\langle\widehat{m}_k|+|\widehat{m}_k\rangle\langle\widehat{2m}_k|)}.
\end{split} 
\end{equation*}
Again we can see that only $n+1$ eigenvalues of $\hat{H}_{n,k}$ are $k$-dependent, and those are the eigenvalues of operator $\hat{\Phi}_{n,k}=\hat{U}^*_{n,k}\hat{H}_{n,k}\hat{U}_{n,k}$, where $\hat{U}_{n,k}=\sum_{y=1}^{n}{|\widehat{2^{n-y}}_k\rangle\langle\overline{y}|+|\widehat{0}_k\rangle\langle\overline{n+1}|}$. 
Along similar lines as for Lemma \ref{the:pcondition} we can prove the following lemma.

\begin{lemma}
Let us fix $n\in\N$ and $k\in\R$. The equation
\begin{equation*}
  6(3\cos p -2\sqrt{2}\cos k)\sin ((n+1)p) =(1+8 \sin^2 k) \sin np
\end{equation*}
has $n+1$ distinct solutions in the interval $(0,\pi)$, and, if $p$ is  a solution of this equation, then $6\cos p$ is an eigenvalue of $\hat{\Phi}_{n,k}$.
\end{lemma}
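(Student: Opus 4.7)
The plan is to mirror the structure of Lemma \ref{the:pcondition}, adapted to the different weights arising for the tree case. First I would write $\hat{\Phi}_{n,k}$ explicitly in the basis $\{|\overline{y}\rangle : y\in[1\,..\,n+1]\}$ by conjugating the given expression for $\hat{H}_{n,k}$ in the basis $B_{n,k}$ with the isometry $\hat{U}_{n,k}$. Since $\hat{U}_{n,k}$ sends $|\overline{y}\rangle$ to $|\widehat{2^{n-y}}_k\rangle$ for $y\in[1\,..\,n]$ and $|\overline{n+1}\rangle$ to $|\widehat{0}_k\rangle$, the result is a weighted path on $n+1$ vertices whose interior edges all carry weight $3$, whose final edge $|\overline{n}\rangle$--$|\overline{n+1}\rangle$ carries the complex weight $3\sin k-\ii\cos k$ (with Hermitian conjugate in the reverse direction), and with a single nonzero diagonal entry $4\sqrt{2}\cos k$ at $|\overline{n+1}\rangle$.

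Next I would seek an eigenvector $|\phi\rangle = \sum_{y=1}^{n+1} a_y|\overline{y}\rangle$ with eigenvalue $6\cos p$. In the interior the eigenequation reduces to the Chebyshev recurrence $a_{y-1}+a_{y+1}=2\cos p\cdot a_y$, and the ansatz $a_y = \sin(yp)$ solves it while automatically satisfying the zero-flux condition at $y=1$ (equivalent to $a_0=0$). The equation at $y=n$ then forces $a_{n+1} = 3\sin((n+1)p)/(3\sin k+\ii\cos k)$ via the standard identity $2\cos p\sin(np)-\sin((n-1)p)=\sin((n+1)p)$. Substituting this expression into the equation at $y=n+1$ and clearing the complex denominator using $|3\sin k+\ii\cos k|^2 = 9\sin^2 k+\cos^2 k = 1+8\sin^2 k$ yields exactly the $p$-equation of the lemma.

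For the distinct-roots claim I would let $F(p)$ denote the difference of the two sides of the $p$-equation and sample $F$ at the grid $p_j = j\pi/(n+1)$. There $\sin((n+1)p_j)=0$, so $F(p_j) = -(-1)^{j+1}(1+8\sin^2 k)\sin(j\pi/(n+1))$, and since $1+8\sin^2 k>0$ for every $k\in\R$ these values strictly alternate in sign for $j=1,\ldots,n$, producing $n-1$ zeros by the intermediate value theorem, one per $(p_j,p_{j+1})$. For the two endpoint intervals I would compute $F'(0)=8n\cos^2 k - 12\sqrt{2}(n+1)\cos k+(9n+18)$, a quadratic in $\cos k$ whose discriminant is $288$ and whose roots $\cos k = 3\sqrt{2}[(n+1)\pm 1]/(4n)$ both lie strictly outside $[-1,1]$; hence $F'(0)>0$ uniformly in $k$, and combined with $F(0)=0$ and $F(p_1)<0$ this gives a zero in $(0,p_1)$. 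A mirror-image computation at $p=\pi$ shows that $F$ just to the left of $\pi$ has sign opposite to $F(p_n)=(-1)^n(1+8\sin^2 k)\sin(n\pi/(n+1))$, delivering a zero in $(p_n,\pi)$. The resulting $n+1$ zeros lie in disjoint subintervals of $(0,\pi)$ and are therefore distinct, and since $\cos$ is injective on $(0,\pi)$ they yield $n+1$ distinct eigenvalues $6\cos p$ of the $(n+1)$-dimensional operator $\hat{\Phi}_{n,k}$.

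The main obstacle I anticipate is the bookkeeping around the complex coupling $3\sin k+\ii\cos k$ at the last edge, and confirming that the entire analysis goes through uniformly for $k\equiv 0\bmod\pi$---a case that Lemma \ref{the:pcondition} had to exclude. Uniformity here rests on the two positivity facts $1+8\sin^2 k>0$ and $|3\sin k+\ii\cos k|^2=1+8\sin^2 k>0$ for every real $k$, which together ensure that the denominator in $a_{n+1}$ never vanishes, that the endpoint quadratic in $\cos k$ does not acquire a root in $[-1,1]$, and that the sign-alternation argument is nondegenerate at every value of $k$.
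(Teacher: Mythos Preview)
Your proposal is correct. The derivation of the $p$-equation---setting up $\hat{\Phi}_{n,k}$ as a weighted path, running the Chebyshev recursion $a_y=\sin(yp)$, solving for $a_{n+1}$, and clearing $|3\sin k+\ii\cos k|^2=1+8\sin^2 k$ at the last row---is exactly the adaptation of the Appendix~\ref{app:pcondition} argument that the paper has in mind when it says ``along similar lines as for Lemma~\ref{the:pcondition}.'' Where you diverge is in the count of solutions: the paper's route is to note that an eigenvalue determines its eigenvector uniquely (up to scale), so the $(n+1)$-dimensional Hermitian operator $\hat{\Phi}_{n,k}$ has $n+1$ distinct eigenvalues, and hence the $p$-equation has $n+1$ distinct roots in $(0,\pi)$. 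Your intermediate-value argument via the grid $p_j=j\pi/(n+1)$ and the endpoint derivatives $F'(0),F'(\pi)$ is a genuinely different and more hands-on count. The paper's argument is shorter, but to make it go through here one must first know that every eigenvalue satisfies $|\lambda|<6$ so that it can be written as $6\cos p$ with $p\in(0,\pi)$; unlike the line case, the naive column-sum bound on $\hat{\Phi}_{n,k}$ exceeds $6$ (at $k=0$ the last column has $1+4\sqrt{2}>6$), so this step needs a bit more care. Your approach sidesteps that entirely---the positivity $1+8\sin^2 k\geq 1$ keeps both the sign-alternation and the coupling $3\sin k+\ii\cos k$ nondegenerate for every real $k$---and as a byproduct yields $|\lambda|<6$ after the fact, which is why your argument also covers $k\equiv 0\bmod\pi$ without the exclusion that Lemma~\ref{the:pcondition} required.
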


Now suppose $n$ is even, and let $\tilde{\lambda}(k)$ be the median (i.e., the $\frac{n+2}{2}$-th largest) eigenvalue of $\hat{\Phi}_{n,k}$. Let $\tilde{\Lambda}(k)=6\arctan\left(\frac{12\sqrt{2}\cos k}{1+8\sin^2k}\right)$. Similarly as before, we have

\begin{lemma}
For every $n$, $\tilde{\lambda}'(k)$ is bounded between $\frac{\tilde{\Lambda}'(k)}{n}\left(1-\frac{2}{n}\right)$ and $\frac{\tilde{\Lambda}'(k)}{n}\left(1+\frac{2}{n}\right)$ and we have  $\left|\tilde{\lambda}''(k)-\frac{\tilde{\Lambda}''(k)}{n}\right|\in O\left(\frac{1}{n^2}\right)$. Also, there exists $\tilde{n}_0\in\N$ such that for all $n\geq \tilde{n}_0$ we have $\tilde{\lambda}''(k)=0$ if and only if $k\equiv\frac{\pi}{2}\mod\pi$.
\end{lemma}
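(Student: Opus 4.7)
The strategy is to mirror the proof of Lemma~\ref{lem:lambdaprime} almost verbatim, using the new p-equation above in place of \eq{pcondition}. First, I would identify which of the $n+1$ distinct roots in $(0,\pi)$ corresponds to the median eigenvalue $\tilde{\lambda}(k)=6\cos p(k)$. Expanding $\sin((n+1)p)=\sin(np)\cos p+\cos(np)\sin p$ and dividing by $\cos(np)$, the new p-equation becomes
\begin{equation*}
\tan(np)=\frac{6(3\cos p-2\sqrt{2}\cos k)\sin p}{(1+8\sin^2 k)-6(3\cos p-2\sqrt{2}\cos k)\cos p}.
\end{equation*}
Since $n$ is even, $n\pi/2$ is a multiple of $\pi$, so the median root is naturally sought in the form $np=n\pi/2+\theta$ with $|\theta|<\pi/2$, and then $\tan(np)=\tan\theta$. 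To leading order $\cos p\approx 0$ and $\sin p\approx 1$, which forces $\tan\theta\to -12\sqrt{2}\cos k/(1+8\sin^2 k)$ as $n\to\infty$. Equivalently, $\theta=-\tilde{\Lambda}(k)/6+O(1/n)$, and therefore $n\tilde{\lambda}(k)=-6n\sin(\theta/n)\to \tilde{\Lambda}(k)$.

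Next, I would differentiate the p-equation implicitly with respect to $k$ to express $p'(k)$, and hence $\tilde{\lambda}'(k)=-6\sin p\cdot p'(k)$, as an explicit rational trigonometric function of $p$, $k$ and $n$; the same machinery then produces $\tilde{\lambda}''(k)$. Substituting the ansatz $p=\pi/2-\tilde{\Lambda}(k)/(6n)+O(1/n^2)$ and Taylor expanding in $1/n$ yields the main term $\tilde{\Lambda}'(k)/n$ together with correction terms of the same order. To promote these to the non-asymptotic bounds $\tilde{\lambda}'(k)\in(\tilde{\Lambda}'(k)/n)(1\pm 2/n)$, I would follow Appendix~\ref{app:hyplem}: combine the corrections into a single rational expression, factor out the dominant trigonometric piece, and use monotonicity of the residual factors in $k\in(0,\pi)\setminus\{\pi/2\}$ to pin down signs uniformly in $n$. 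The bound $|\tilde{\lambda}''(k)-\tilde{\Lambda}''(k)/n|\in O(1/n^2)$ then follows by differentiating the implicit relation once more and applying the same expansion.

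For the final claim that $\tilde{\lambda}''(k)=0$ if and only if $k\equiv\pi/2\mod\pi$ once $n\geq\tilde{n}_0$, I would first verify by direct computation that $\tilde{\Lambda}''$ vanishes precisely at $\pi/2$ and $3\pi/2$ on $[0,2\pi]$ and that $|\tilde{\Lambda}''(k)|$ is bounded below by a positive constant on any closed subinterval avoiding these two points. The $O(1/n^2)$ error from the second-derivative bound is then dominated by this lower bound for $n$ large enough, so no new zeros of $\tilde{\lambda}''$ can appear. At $k=\pi/2$ and $k=3\pi/2$ the vanishing of $\tilde{\lambda}''$ is forced by the symmetry $(p,k)\mapsto(\pi-p,\pi-k)$ of the p-equation, which (combined with uniqueness of the median root near $\pi/2$) shows that $\tilde{\lambda}(\pi-k)=-\tilde{\lambda}(k)$ and hence kills every even derivative of $\tilde{\lambda}$ at $\pi/2$, exactly as in the line case.

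The main obstacle is the quantitative bookkeeping underlying the $(1\pm 2/n)$ bound: this is a claim for every $n$, not merely an asymptotic one, so the signs and magnitudes of the $O(1/n)$ corrections have to be tracked uniformly over all $k\in(0,2\pi)$ rather than just bounded in big-$O$. This is precisely the difficulty that makes Appendix~\ref{app:hyplem} technically involved, and I expect the analogous estimate for $\hat{\Phi}_{n,k}$ to require the same sort of case analysis, with the coefficients $3$, $2\sqrt{2}$ and $1+8\sin^2 k$ producing slightly messier constants but no structural complications beyond those already handled in the line case.
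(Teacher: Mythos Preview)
Your overall strategy matches the paper's: the paper gives no separate proof of this lemma, merely writing ``Similarly as before,'' so the intended argument is precisely the adaptation of Appendix~\ref{app:hyplem} that you outline. Your parametrisation $np=n\pi/2+\theta$ differs only cosmetically from the paper's $p=\pi/2-\theta/(n+1)$, and the implicit-differentiation and bounding steps for $\tilde\lambda'$ and $\tilde\lambda''$ are the same in substance.

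There is, however, a genuine gap in your argument for the final clause. Knowing that $|\tilde\Lambda''(k)|$ is bounded below on closed subintervals avoiding $\pi/2$ and $3\pi/2$ only lets you rule out zeros of $\tilde\lambda''$ \emph{outside} some fixed $\epsilon$-neighbourhood of those points, with $\tilde n_0$ depending on $\epsilon$. Since $\tilde\Lambda''(k)/n\to 0$ as $k\to\pi/2$, the lower bound degenerates there and the $O(1/n^2)$ error can no longer be absorbed; as written, your argument does not exclude spurious zeros of $\tilde\lambda''$ in a neighbourhood of $\pi/2$ (other than the forced zero at $\pi/2$ itself). In Appendix~\ref{app:hyplem} the paper closes this gap by also proving $|\lambda^{(3)}(k)-\Lambda^{(3)}(k)/n|\le b/n^2$ and then splitting $[0,\pi/2]$ at a fixed $\kappa$: on $[0,\kappa]$ the second-derivative bound suffices, while on $[\kappa,\pi/2]$ one uses that $\Lambda^{(3)}$ is bounded away from zero, hence $\lambda^{(3)}$ has constant sign there for large $n$, which together with $\lambda''(\pi/2)=0$ forces $\lambda''\neq 0$ on $[\kappa,\pi/2)$. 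You need this third-derivative step (or an equivalent local argument near $\pi/2$) to complete the proof.
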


Let $|\hat{\phi}(k)\rangle\in\C^{n+1}$ be the unique eigenvector of $\hat{\Phi}_{n,k}$ corresponding to the eigenvalue $\tilde{\lambda}(k)$ such that $\langle\hat{\phi}(k)|\hat{\phi}(k)\rangle=1$ and $\langle\overline{1}|\hat{\phi}(k)\rangle>0$, let $|\hat{\psi}(k)\rangle=\hat{U}_{n,k}|\hat{\phi}(k)\rangle$, and let 
\begin{equation*}
|\hat{\eta}_x\rangle = \frac{1}{\sqrt{2\pi}}
\int_{0}^{2\pi}
e^{-\ii k x}|\tilde{k}\rangle\otimes|\hat{\psi}(k)\rangle\,\dd k.
\end{equation*}
Similarly as for the quantum snake walk on the line,  numerical results suggest that $|\hat{\eta}_x\rangle$ is localized on the tree $T_1$ at the distance $|x|$ away from the glued part. If this property indeed holds, we can show that for $n\geq \tilde{n}_0$ and for asymptotically large time of the evolution $t$ the state $|\hat{\eta}_x\rangle$ evolves as two wave packets moving with momentum $\frac{8\sqrt{2}}{n+2}$ in the opposite directions, that is, one moving up the tree and the other down the tree. However, now we need to be careful when we talk about asymptotically large $t$, because we still want $t$ to be small enough so that the wave packets do not get close to the root of $T_1$ or the glued part.

For the purpose of our algorithm, we are interested in only one of those two wave packets, the one which moves towards the glued part. Our hope is that in $O(\poly N)$ time this wave packet will propagate far enough to reach $T_2$ and with high probability will give us a snake containing vertices from both $T_1$ and $T_2$. The regular quantum walk on the line started from a single vertex evolves as two wave packets moving in the opposite directions with momentum $2$ \cite[Section 3.3.2]{ChildsThesis}, but for any $\omega\in[0,2]$ we can create a wave packet which moves with momentum $\omega$ in one single direction (see \cite{ChildsUniv}). We would like to do an analogous thing for the quantum snake walk. For $k_0\in(\pi,2\pi)$ and $\sigma>0$ consider the state
\begin{equation*}
\begin{split}
|\xi_{x_0,k_0,\sigma}\rangle & = \frac{1}{\sqrt{\erf(\pi/\sqrt{2}\sigma)}}
\int_{k_0-\pi}^{k_0+\pi}
\sqrt{\frac{1}{\sigma\sqrt{2\pi}}e^{-\frac{(k-k_0)^2}{2\sigma^2}}}
e^{-\ii kx_0}|\tilde{k}\rangle\otimes|\hat{\psi}(k)\rangle\,\dd k \\
& =
\sum_{z\in\Z}
e^{\ii k_0z}\sqrt{\frac { e^{\frac{-z^2}{2(1/2\sigma)^2}}} {(1/2\sigma)\sqrt{2\pi}}}
\frac{\Re\left(\erf\left(\frac{\pi}{2\sigma}+\ii z\sigma\right)\right)}{\sqrt{\erf(\pi/\sqrt{2}\sigma)}} \; |\hat{\eta}_{x_0+z}\rangle \\
& \approx
\sum_{z=-\lfloor\frac{1}{2\sigma^2}\rfloor}^{\lfloor\frac{1}{2\sigma^2}\rfloor}e^{\ii k_0z} \sqrt{\binom{2\lfloor\frac{1}{2\sigma^2}\rfloor}{z+\lfloor\frac{1}{2\sigma^2}\rfloor} \Big/ 2^{2\lfloor\frac{1}{2\sigma^2}\rfloor} }
\;|\hat{\eta}_{x_0+z}\rangle,
\end{split}
\end{equation*}
where $\erf(\cdot)$ denotes the error function and $\Re(\cdot)$ denotes the real part function. One can show that $\langle\xi_{x_0,k_0,\sigma}|\xi_{x_0,k_0,\sigma}\rangle=1$. Numeric results suggest that $|\xi_{x_0,k_0,\sigma}\rangle$ evolves as a Gaussian-shaped wave packet moving towards the glued part of $G^M$ with momentum $\tilde{\lambda}'(k_0)$ (see Appendix \ref{app:GMnum}).

For any snake in the superposition $e^{-\ii \hat{H}_n t}|\xi_{x_0,k_0,\sigma}\rangle$, other than the position of its initial vertex, we are also interested in how many different depths in the tree $T_1$ its vertices possesses. To be more precise, for a snake $s=(v_0,\ldots,v_n)$ let $(z_0,\ldots,z_n)\in\Z^{n+1}$ be such that $v_l\in Y_{z_l}$ for all $l\in[0\,..\,n]$. We define the {\em span} of $s$ to be $(\min_lz_l,\max_lz_l)$ and the {\em span length} of $s$ to be $\max_lz_l-\min_lz_l$. We are interested in snakes with reasonably large span length, because, after all, if we obtain a snake which contains a solution of the extended glued trees problem, the span length of this snake will be at least $2N+1$. Numerics presented in Appendix \ref{app:GMnum} suggest that, if we measure $e^{-\ii \hat{H}_n t}|\xi_{x_0,k_0,\sigma}\rangle$ in the standard basis and output the span length of the obtained snake, then the expected value of this random variable is in $\Omega(\sqrt{n})$, which is good. However, the wave packet might drastically loose its average span length when it gets close to the glued part and the action of the operator $\tilde{H}_n$ cannot be approximated anymore by one of $\hat{H}_n$. In Appendix \ref{app:unfort} we describe numerical data which, unfortunately, suggest that it indeed may be the case.

\subsection{Scattering on the glued part}

Let us consider yet another analogy to the regular quantum walk on lines. Consider the regular quantum walk on a graph $A$ constructed from some finite graph $B$ by attaching two semi-infinite lines to two vertices of $B$. On either of those two lines we can construct a wave packet which moves towards the graph $B$. What are the probabilities of this wave packet being reflected form $B$ and it being transmitted through $B$ (to the other line) can be calculated by inspecting the eigenvalues and eigenvectors of the adjacency matrix of $A$ and then using the standard scattering theory (see \cite{ChildsUniv}, for example).

Similarly, for the quantum snake walk on $G^M$ we can construct wave packets which move on $T_1$ towards the glued part (to be more precise, we seem to be able to construct such wave packets, since we have not rigorously proven that yet), we can also construct wave packets which move on $T_1$ away from the glued part, and we can do the same on $T_2$.
In order to understand what happens to the wave packets moving towards the glued part once they reach it, let us inspect particular eigenvectors and eigenvalues of $\tilde{H}_n$. 
Let $|\check{\psi}(k)\rangle$ be the unique vector satisfying $\langle j|\check{\psi}(k)\rangle=\langle \hat{\psi}(k)|j\oplus1^n\rangle$ for all $j\in\{0,1\}^n$. Numerical data for $n$ up to $10$ agree with the following hypothesis.
\begin{hypothesis}
\label{hyp:evec}
For every $k\in\R$ $\tilde{\lambda}(k)$ is the eigenvalue of $\tilde{H}_n$ coresponding to the eigenvector
\begin{equation*}
\sum_{x=-\infty}^{-n+1} \!\!\! |x\rangle\otimes\big(e^{\ii k x}|\hat{\psi}(k)\rangle + R(k)\,e^{-\ii k x}|\hat{\psi}(-k)\rangle\big) + \!\!
\sum_{x=-n+2}^{n-1} \!\!\!\!  |x\rangle\otimes \!\!\!\!\! \sum_{j\in\{0,1\}^n} \!\!\!\! \alpha_{x,j}(k)|j\rangle +
\sum_{x=n}^{+\infty} \!  |x\rangle\otimes T(k)\, e^{\ii k x}|\check{\psi}(k)\rangle,
\end{equation*}
where $\alpha_{x,j}(k)\in\C$,
\begin{equation*}
R(k)=\frac{1-2e^{2\ii k}}{5-2e^{-2\ii k}-2e^{2\ii k}} \quad\text{and}\quad T(k)=\frac{\sqrt{2}(e^{-2\ii k}-3+2e^{2\ii k})}{5-2e^{-2\ii k}-2e^{2\ii k}}.
\end{equation*}
\end{hypothesis}
Let us denote the eigenvector given in Hypothesis \ref{hyp:evec} by  $|\mu(k)\rangle$. 
Assuming Hypothesis \ref{hyp:evec}, the standard scattering theory suggests that the wave packet (started as) $|\xi_{x_0,k_0,\sigma}\rangle$, where $k_0\in(\pi,2\pi)$ and $\sigma$ is assumed to be small, will be reflected from the glued part with probability $|R(k_0)|^2=\frac{1}{1+8\sin^2\!k_0}$ and transmitted through it with the probability $|T(k_0)|^2=\frac{8\sin^2\!k_0}{1+8\sin^2\!k_0}$. Also, $\arg'(T(k_0))$, called the {\em effective length}, divided my the momentum of the wave packet describes for how much time the glued part delays the wave packet propagating through it. It is really interesting that the transmission coefficient $T(k)$ does not depend on $n$, the length of the snake, and it still stays the same if we consider eigenvectors of $\tilde{H}_n$ corresponding to any other $k$-dependent eigenvalue of $\hat{H}_{n,k}$, not only the median eigenvalue. Even more, already in \cite{CCDFGSart} Childs et al.\ proved that for the regular quantum walk on $G^M$ the probability of a wave packet corresponding to the vector $|\tilde{k}\rangle$ being transmitted through the glued part is $\frac{8\sin^2\!k}{1+8\sin^2\!k}$. Figure \ref{fig:transm} shows that the wave packet $|\xi_{x_0,k_0,\sigma}\rangle$ has the highest probability of being transmitted through the glued part and has the minimal effective length of the glued part for $k_0=\frac{3\pi}{2}$. It also seems that $\tilde{\lambda}'(k_0)$ reaches its maximum at $k_0=\frac{3\pi}{2}$ (we have $\tilde{\lambda}'(\frac{3\pi}{2})=\frac{8\sqrt{2}}{n+2}$), therefore the faster a wave packet is, the most likely it will propagate through the glued part (note: this is the case only for wave packets corresponding to the median eigenvalue).

\begin{figure}[htp]
\centering
\includegraphics[scale=0.95]{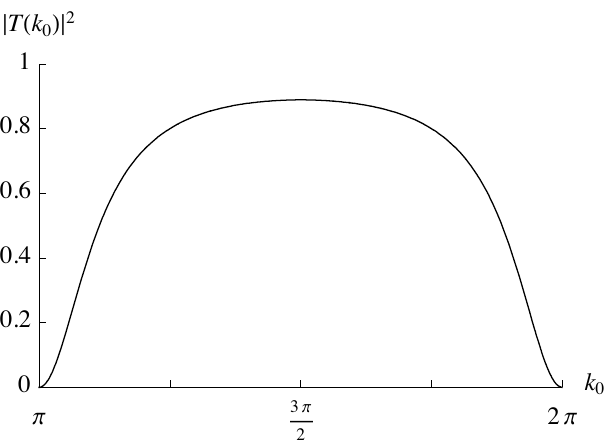}
\quad
\includegraphics[scale=0.95]{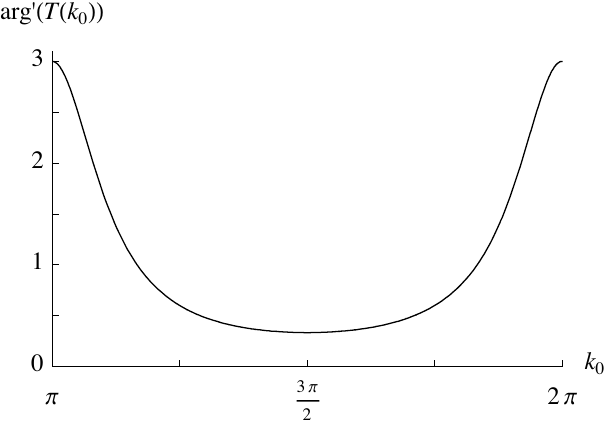}
\caption{The likely probability of the wave packet $|\xi_{x_0,k_0,\sigma}\rangle$ propagating through the glued part on the left, and, on the right, its corresponding effective length of the glued part.}
\label{fig:transm}
\end{figure}

Even if we rigorously proved that we can construct localized wave packets having momenta as stated above, the span lengths of those wave packets are as stated, and so are the transmission probabilities, such a proof most likely would use the method of stationary phase approximation, which deals with the case when all (i.e., initial, reflected and transmitted) wave packets are far away from the glued part. But the problem is: what happens during the time when wave packets encounter the glued part is exactly what we are interested in for the analysis of our algorithm. Unfortunately we cannot obtain enough numerical data to have a good intuition what is the maximal probability obtaining a snake connecting $T_1$ and $T_2$ for any given $N$ and $n\geq 2N+1$. Also, $n$ necessary for an efficient algorithm may be very large because we only restrict it to be in $O(\poly N)$.
Nonetheless, a numerical inspection of the vector $|\mu(k)\rangle$ done in Appendix \ref{app:unfort} suggests that the expected span length of snakes in a wave packet is significantly reduced when the wave packet propagates through the glued part. We currently do not have any candidates for a pattern this reduction follows, so it is not clear if the reduction is exponential or not.

\section{Discussion}

In this paper we introduced the continuous-time quantum snake walk and discussed one of its potential algorithmic applications, the extended glued trees problem. Assuming the algorithm sketched in Section \ref{ssec:algIdea} which uses wave packets as initial states indeed efficiently finds a path connecting both roots of the glued trees graph, it almost certainly will not find a shortest path. It would be interesting to see if there exists an efficient algorithm finding a shortest path, or an efficient quantum algorithm for any other pathfinding problem, which is based on the quantum snake walk.

On the other hand, if the algorithm we propose is not efficient, despite that the wave packets (most likely) are being transmitted through the glued part with high probability and in polynomial time, it would suggest that possibly there are some fundamental limitations which restrict us from solving the extended glued trees problem efficiently. Finding and understanding such limitations would be an interesting and useful thing to do.

Another potential direction for future research is to consider discrete-time analogues of the continuous-time quantum snake walk (for example, by generalizing Mc\,Gettrick's results \cite{McG}) and their applications.

\section*{Acknowledgments}
I wish to thank John Watrous, Richard Cleve, and Andrew Childs for numerous helpful discussions regarding this work. In particular, the idea to consider quantum walks whose states are paths is due to Watrous. This work was supported by a Mike and Ophelia Lazaridis Fellowship.

\begin{bibdiv}
\begin{biblist}

\bib{ATart}{inproceedings}{ 
  author = {Aharonov, D.},
  author = {Ta-Shma, A.},
  title = {Adiabatic Quantum State Generation and Statistical Zero Knowledge},
  year = {2003},
  pages = {20\ndash29},
  booktitle = {Proc. 35th ACM symposium on Theory of Computing}
}

\bib{AmbainisLNotes}{misc}{
  author = {Ambainis, A.},
  title = {Quantum Algorithms and Complexity},
  note = {Lecture notes, University of Waterloo, 2005. Available at \href{http://www.math.uwaterloo.ca/~ambainis/CO781.htm}{http://www.math.uwaterloo.ca/$\sim$ambainis/CO781.htm}}
}

\bib{BV93}{inproceedings}{
  author = {Bernstein, E.},
  author = {Vazirani, U.},
  title = {Quantum complexity theory},
  booktitle = {Proc. 25th ACM Symposium on Theory of Computing},
  pages = {11\ndash20},
  year = {1993}
}

\bib{BeAhClSa}{article}{
  author={Berry, D.~W.},
  author={Ahokas, G.},
  author={Cleve, R.},
  author={Sanders, B.~C.},
  title = {Efficient Quantum Algorithms for Simulating Sparse {H}amiltonians},
  pages = {359\ndash371},
  year = {2007},
  journal = {Communications in Mathematical Physics},
  volume = {270},
  number = {2}
}

\bib{ChildsThesis}{thesis}{ 
  author = {Childs, A.~M.},
  title = {Quantum Information Processing in Continuous Time},
  type = {Ph.D. thesis},
  year = {2004},
  school = {Massachusetts Institute of Technology}
}

\bib{ChildsUniv}{article}{
  author = {Childs, A.~M.},
  title = {Universal computation by quantum walk},
  journal = {Physical Review Letters},
  volume = {102},
  number = {180501},
  year = {2009},
  note = {Also available at \href{http://arxiv.org/abs/0806.1972}{arXiv:0806.1972v1}}
}

\bib{CCDFGSart}{inproceedings}{
  author = {Childs, A.~M.},
  author = {Cleve, R.},
  author = {Deotto, E.},
  author = {Farhi, E.},
  author = {Gutmann, S.},
  author = {Spielman, D.~A.},
  title = {Exponential Algorithmic Speedup by a Quantum Walk},
  booktitle = {Proc. 35th ACM symposium on Theory of Computing},
  pages = {59\ndash68},
  year = {2003},
 note = {Also available at \href{http://arxiv.org/abs/quant-ph/0209131}{arXiv:quant-ph/0209131v2}}
}

\bib{ChGst}{article}{
  author = {Childs, A.~M.},
  author = {Goldstone, J.},
  title = {Spatial search by quantum walk},
  journal = {Physical Review A},
  volume = {70},
  year = {2004},
  number = {022314}
}

\bib{ChSchVaz}{inproceedings}{
  author = {Childs, A.~M.},
  author = {Schulman, L.~J.},
  author = {Vazirani, U.~V.},
  title = {Quantum algorithms for hidden nonlinear structures},
  booktitle = {Proc. 48th IEEE Symposium on Foundations of Computer Science},
  year ={2007},
  pages = {395\ndash404}
}

\bib{Deutsch1}{article}{
 title = {Quantum Theory, the {C}hurch-{T}uring Principle and the Universal Quantum Computer},
 author = {Deutsch, D.},
 year = {1985},
 journal = {Proc. Roy.~Soc.~London},
 series = {A},
 pages = {97\ndash117},
 volume = {400}
}

\bib{FarhiGoldstoneGutmann2008}{article}{ 
 title = {A Quantum Algorithm for the {Hamiltonian} {NAND} Tree},
 author = {Farhi, E.},
 author = {Goldstone, J.},
 author = {Gutmann, S.},
 year = {2008},
 journal = {Theory of Computing},
 volume ={4},
 pages = {169\ndash190}
}

\bib{FG98}{article}{
  author = {Farhi, E.},
  author = {Gutmann, S.},
  title = {Analog analogue of a digital quantum computation},
  journal = {Physical Review A},
  volume = {57},
  year = {1998},
  pages = {2403\ndash2406}
}

\bib{Farhi1998}{article}{
   author = {Farhi, E.},
  author = {Gutmann, S.},
   title = {Quantum Computation and Decision Trees},
   journal = {Physical Review A},
   year = {1998},
   volume = {58},
   pages = {915\ndash928}
}

\bib{Kato}{book}{
	author =        {Kato, T.},
	title =         {A Short Introduction to Perturbation Theory for Linear Operators},
      year = {1982},
      publisher = 	{Springer}
}

\bib{Markushevich}{book}{
	author =        {Markushevich, A.~I.},
	author =        {Silverman, R.~A.},
	title =         {Theory of Functions of a Complex Variable},
      year = {1965},
      publisher = 	{Prentice Hall},
      volume = {I}
}

\bib{McG}{article}{ 
 title = {One Dimensional Quantum Walks with Memory},
 author = {{Mc\,Gettrick}, M.},
 note = {Available at \href{http://arxiv.org/abs/0911.1653}{arXiv:0911.1653v1}}
}

\bib{NVv1}{misc}{
author = {Nayak, A.},
author = {Vishwanath, A.},
title = {Quantum Walk on the Line (Extended Abstract)},
note = {Available at  \href{http://arxiv.org/abs/quant-ph/0010117}{arXiv:quant-ph/0010117v1}},
year = {}
}

\bib{RosmanisThesis}{thesis}{
  author = {Rosmanis, A.},
  title = {Quantum Snake Walk on Graphs},
  type={Master's thesis},
  year = {2009},
  school = {University of Waterloo},
  note = {Available at \href{http://hdl.handle.net/10012/4924}{http://hdl.handle.net/10012/4924}}
}

\bib{Shor1994}{inproceedings}{
    author = {Shor, P.~W.},
    booktitle = {Proc. 35th IEEE Symposium on Foundations of Computer Science},
    pages = {124\ndash134},
    title = {Algorithms for quantum computation: discrete logarithms and factoring},
    year = {1994}
}

\bib{Simon94}{inproceedings}{
  author = {Simon, D.~R.}, 
  title = {On the power of quantum computation},
  booktitle = {Proc. 35th IEEE Symposium on Foundations of Computer Science},
  pages = {116\ndash123},
  year = {1994}
}

\end{biblist}
\end{bibdiv}

\appendix
\section{Proof of Lemma \ref{the:pcondition} } \label{app:pcondition}

Let $|\phi\rangle=\sum_{y=1}^{n+1}{a_y|\overline{y}\rangle}$ be an eigenvector of $\Phi_{n,k}$ and $\lambda$ be its corresponding eigenvalue. The maximum absolute column sum norm of (\ref{eq:lineHnk}) is at most $4$ (it is exactly $4$ unless $n=1$), which implies that all the eigenvalues of $H_{n,k}$ are at most $4$ by the absolute value, and therefore so are the eigenvalues of $\Phi_{n,k}$. In fact, it is easy to see that $|\lambda|=4$ would imply $k\equiv 0\mod\pi$ and $|\phi\rangle\propto|\overline{n+1}\rangle$, which is also the only case when $a_1=0$. Hence, there exist unique $p\in(0,\pi)$ and $c\neq0$ such that $\lambda=4\cos p$ and $a_1=c\sin p$.

We have
\begin{equation*}
  \lambda a_1 = \lambda \langle\overline{1}|\phi\rangle = \langle\overline{1}|\Phi_{n,k}|\phi\rangle = 2 \langle\overline{2}|\phi\rangle = 2 a_2,
\end{equation*}
from which we get $a_2=(4 \cos p\cdot c\sin p)/2=c\sin 2p$. Now let us use induction. Let $2\leq l<n$, and let us assume $a_y=c\sin yp$ for all $y\in[1\,..\,l]$.  We have
\begin{equation*}
  \lambda a_l = \lambda \langle \overline{l}|\phi\rangle = \langle \overline{l}|\Phi_{n,k}|\phi\rangle
  = (2 \langle \overline{l-1}| + 2 \langle \overline{l+1}|)|\phi\rangle = 2 a_{l-1}+2 a_{l+1},
\end{equation*}
which gives us
\begin{equation*}
  a_{l+1} = (4\cos p\cdot c\sin lp - 2c\sin (l-1)p)/2 = c \sin (l+1) p.
\end{equation*}
Therefore, by induction, $a_y=c\sin yp$ for $y\in[1\,..\,n]$. Similarly,
\begin{equation*}
  \lambda a_n = \lambda \langle \overline{n}|\phi\rangle = \langle \overline{n}|\Phi_{n,k}|\phi\rangle
  = (2 \langle \overline{n-1}| + 2\sin k\, \langle \overline{n+1}|)|\phi\rangle = 2 a_{n-1}+2a_{n+1}\sin k
\end{equation*}
implies $a_{n+1} = c \frac{\sin (n+1)p}{\sin k}$.
Finally,
\begin{equation*}
  \lambda a_{n+1} = \lambda \langle \overline{n+1}|\phi\rangle = \langle \overline{n+1}|\Phi_{n,k}|\phi\rangle
  = (2\sin k \langle \overline{n}| + 4\cos k \langle \overline{n+1}|)|\phi\rangle
   = 2\sin k\cdot a_n + 4 \cos k\cdot a_{n+1}
\end{equation*}
gives us
\begin{equation} \label{eq:pcondition2}
  2(\cos p -\cos k)\sin ((n+1)p) = \sin^2 k \sin np.
\end{equation}

One can see that (\ref{eq:pcondition2}) is both the necessary and sufficient condition for $4\cos p$ to be an eigenvalue of $\Phi_{n,k}$. Also, as shown above, an eigenvalue of $\Phi_{n,k}$ uniquely (up to a global factor) determines the eigenvector corresponding to it. This implies that all the eigenvalues of $\Phi_{n,k}$ are distinct, thus (\ref{eq:pcondition2}) has $n+1$ distinct solutions in the interval $(0,\pi)$.

\section{Proof of Lemma \ref{lem:lambdaprime} } \label{app:hyplem}

For $n=2$ we have $\lambda(k)=2\cos(k)$ and the lemma holds, thus let us assume $n\geq 4$. 
We know that $\lambda(0)=4\cos\frac{\frac{n}{2}\pi}{n+1}$ and $\lambda(\pi)=4\cos\frac{(\frac{n}{2}+1)\pi}{n+1}$, therefore a corollary of Theorem \ref{thm:lambdaD} is that $|\lambda(k)|\leq4\sin\frac{\pi}{2n+2}$. Let $p(k)=\frac{\pi}{2}-\frac{\theta(k)}{n+1}\in[\frac{n\pi}{2n+2},\frac{(n+2)\pi}{2n+2}]$ be such that $\lambda(k)=4\cos p(k)$, and therefore $\theta(k)\in[-\frac{\pi}{2},\frac{\pi}{2}]$. Hence $\lambda(k)=4\sin(\frac{\theta(k)}{n+1})$, where $\theta(k)$ is the solution of
\begin{equation*}
2\left(\cos k-\sin\left(\frac{1}{n+1}\theta(k)\right)\right)\cos\theta(k) = \sin^2 k \sin \left(\frac{n}{n+1}\theta(k)\right),
\end{equation*}
which is just a rewritten $p$-equation.
By simple trigonometric derivations this gives us
\begin{equation*}
\tan\theta(k)=
\frac{2\frac{\cos k-\sin\left(\frac{\theta(k)}{n+1}\right)}{\sin^2 k}+\sin \left(\frac{\theta(k)}{n+1}\right)}{\cos \left(\frac{\theta(k)}{n+1}\right)}.
\end{equation*}
Notice that $(\tan\theta(k))'=\theta'(k)(1+\tan^2\theta(k))$, and we already have an expression for $\tan\theta(k)$. Thus we can obtain an expression for $\theta'(k)$, from which 
\begin{equation*}
\lambda'(k)
=\frac  {4\theta'(k)\cos\frac{\theta(k)}{n+1}}  {n+1}
=-\frac {16\sin k\cdot\cos^2\!\frac{\theta(k)}{n+1}}  {n(3+\cos{2k})+4-4(n+1)\cos k\cdot\sin\frac{\theta(k)}{n+1}}.
\end{equation*}
Because $\lambda(\kappa)=\lambda(-\kappa)$ and $\lambda(\frac{\pi}{2}+\kappa)=-\lambda(\frac{\pi}{2}-\kappa)$ for any $\kappa\in\R$, without the loss of generality let us assume $k\in(0,\frac{\pi}{2})$; hence $\theta(k)\in(0,\frac{\pi}{2})$. Therefore $0\leq\sin\frac{\theta(k)}{n+1}\leq\frac{\pi}{2(n+1)}$,  $1-\frac{\pi^2}{8(n+1)^2}\leq\cos\frac{\theta(k)}{n+1}\leq1$ and
\begin{equation*}
0   \leq   4(n+1)\cos k\cdot\sin\frac{\theta(k)}{n+1}   \leq   2\pi\cos k   \leq   2\pi,
\end{equation*}
which implies
\begin{equation*}
\begin{split}
\lambda'(k)
& \geq -\frac{16\sin k}{n(3+\cos{2k})-2(\pi-2)} \geq -\frac{16\sin k}{n(3+\cos{2k})}\left(1+\frac{2}{n}\right), \\
\lambda'(k)
& \leq -\frac{16\sin k\cdot\left(1-\frac{\pi^2}{8(n+1)^2}\right)^2}{n(3+\cos{2k})+4}
\leq -\frac{16\sin k}{n(3+\cos{2k})}\left(1-\frac{2}{n}\right).
\end{split}
\end{equation*}
Since we already have expressions for $\lambda'(k)$ and $\theta'(k)$, we can get that
\begin{equation}
\begin{split}
\lambda''(k)=8 \cos^2\frac{\theta(k)}{n+1}
\Bigg(
& -
\frac{n (\cos k+\cos^3\!k+2 \cos k \sin^2\!k)}
{\left(n(1+\cos^2\!k)+2-2(n+1)\cos k\cdot\sin\frac{\theta(k)}{n+1}\right)^2} \\
& +
\frac{2 (n-1+2 \cos^2\!k)\sin\frac{\theta(k)}{n+1}-2 \cos k}
{\left(n(1+\cos^2\!k)+2-2(n+1)\cos k\cdot\sin\frac{\theta(k)}{n+1}\right)^2} \\
& +
\frac{4 (n+1) \cos k \sin^2\!k  \cdot\cos^2\frac{\theta(k)}{n+1}}
{\left(n(1+\cos^2\!k)+2-2(n+1)\cos k\cdot\sin\frac{\theta(k)}{n+1}\right)^3}
\Bigg).
\end{split}
\end{equation}
The first term of this equality is in $\Theta\left(\frac{1}{n}\right)$ while the other two are in $O\left(\frac{1}{n^2}\right)$. More precise analysis of the first term reveals that there is a constant $a>0$ such that $\left|\lambda''(k)-\frac{\Lambda''(k)}{n}\right|\leq \frac{a}{n^2}$.

Regarding the second part of the lemma, along the similar lines as for $\lambda''(k)$, one can show that there also exists a constant $b>0$ such that $\left|\lambda^{(3)}(k)-\frac{\Lambda^{(3)}(k)}{n}\right|\leq \frac{b}{n^2}$. Because of that, there exist $n_0\in\N$ and $\kappa\in(0,\frac{\pi}{2})$ such that $\Lambda''(k)+\frac{a}{n_0}<0$ for all $k\in[0,\kappa]$ and $\Lambda^{(3)}(k)-\frac{b}{n_0}>0$ for all $k\in[\kappa,\frac{\pi}{2}]$. Consider $n\geq n_0$. We have $\lambda''(k)<0$ for all $k\in[0,\kappa]$. The fact that $\lambda''(\frac{\pi}{2})=0$ and $\lambda^{(3)}(k)>0$ for all $k\in[\kappa,\frac{\pi}{2}]$ implies $\lambda''(k)<0$ for $k\in[\kappa,\frac{\pi}{2})$. Thus, $\lambda''(k)\neq0$ for all $n\geq n_0$ and $k\in[0,\frac{\pi}{2})$.

\section{Numerics for the quantum snake walk on the line} \label{app:linenum}

Fix even $n\in\Z$. For any given $k$ we can numerically calculate the median eigenvalue $\lambda(k)$ of $\Phi_{n,k}$ and its corresponding eigenvector $|\phi(k)\rangle$. Then we can get a numerical approximation of $|\eta_0\rangle$ by approximating the integral (\ref{eq:defeta}) defining it by a Riemann sum.

For every unit vector $|\chi\rangle\in\C^{\Z\times\{0,1\}^n}$, $y\in\Z$ and $j\in\{0,1\}^n$ let $\chi_{y,j}=\langle y,j|\chi\rangle$ and let $|\chi\rangle_x=\sum_{j\in\{0,1\}^n}\chi_{x,j}|x,j\rangle$. Then, in the standard basis, we have $||\chi\rangle_x|_1=\sum_{j\in\{0,1\}^n}|\chi_{x,j}|$ and  $||\chi\rangle_x|_2^2=\sum_{j\in\{0,1\}^n}|\chi_{x,j}|^2$. If we measure $|\chi\rangle$ in the standard basis, $||\chi\rangle_x|_2^2$ is the probability of obtaining a snake which starts at the position $x$; therefore the probability distribution $\{||\chi\rangle_x|_2^2\,:\,x\in\Z\}$ gives us good intuition about the `location' of the state $|\chi\rangle$ on the line.

\begin{lemma} \label{lem:eta0even}
For every even $x\in\Z$ we have $|\eta_0\rangle_x=0$.
\end{lemma}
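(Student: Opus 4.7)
The plan is to exploit the symmetry $H_{n,k+\pi}=-H_{n,k}$, which is immediate from \eq{lineHnk} because every term there carries a factor $e^{\pm\ii k}$ and therefore flips sign under $k\mapsto k+\pi$. This has two consequences: $H_{n,k+\pi}$ and $H_{n,k}$ have identical eigenvectors in the standard basis while their $k$-dependent spectra are related by $\lambda_l(k+\pi)=-\lambda_{n+2-l}(k)$; and because $n$ is even the median index $l=(n+2)/2$ is fixed by $l\mapsto n+2-l$, so $\lambda(k+\pi)=-\lambda(k)$ and the standard-basis eigenvector of $H_{n,k+\pi}$ for eigenvalue $\lambda(k+\pi)$ is proportional to $|\psi(k)\rangle$.

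The next step is to pin down this proportionality constant and show $|\psi(k+\pi)\rangle=-|\psi(k)\rangle$. I would do this by pushing $k\mapsto k+\pi$ through every ingredient. Since $|u_{0,k+\pi}\rangle=-|u_{0,k}\rangle$ and $|u_{1,k+\pi}\rangle=-|u_{1,k}\rangle$, while $|v_0\rangle$ and $|v_1\rangle$ are $k$-independent, one gets $|\widehat{0}_{k+\pi}\rangle=(-1)^n|\widehat{0}_k\rangle$ and $|\widehat{m}_{k+\pi}\rangle=(-1)^{n-\lfloor\log_2 m\rfloor}|\widehat{m}_k\rangle$ for $m\geq1$. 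For even $n$ this rearranges into $U_{n,k+\pi}=U_{n,k}\tilde{D}$, where $\tilde{D}|\overline{y}\rangle=(-1)^y|\overline{y}\rangle$ for $y\leq n$ and $\tilde{D}|\overline{n+1}\rangle=|\overline{n+1}\rangle$. A direct substitution into \eq{Phik} then yields
\begin{equation*}
\Phi_{n,k+\pi}=-\tilde{D}\,\Phi_{n,k}\,\tilde{D},
\end{equation*}
so $\tilde{D}|\phi(k)\rangle$ is a unit eigenvector of $\Phi_{n,k+\pi}$ for eigenvalue $-\lambda(k)=\lambda(k+\pi)$ whose $|\overline{1}\rangle$-component is negative. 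The convention $\langle\overline{1}|\phi(\cdot)\rangle>0$ from Section~\ref{ssec:linewave} thus forces $|\phi(k+\pi)\rangle=-\tilde{D}|\phi(k)\rangle$, and since $\tilde{D}^2=I$ we conclude
\begin{equation*}
|\psi(k+\pi)\rangle=U_{n,k+\pi}|\phi(k+\pi)\rangle=-U_{n,k}\tilde{D}^2|\phi(k)\rangle=-|\psi(k)\rangle.
\end{equation*}

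The statement now follows from a one-line Fourier argument applied to
\begin{equation*}
\langle x,j|\eta_0\rangle=\frac{1}{2\pi}\int_0^{2\pi}e^{\ii k x}\langle j|\psi(k)\rangle\,\dd k.
\end{equation*}
Splitting the integral at $\pi$ and substituting $k\mapsto k+\pi$ in the second half turns it into $\tfrac{1}{2\pi}(1-(-1)^x)\int_0^{\pi}e^{\ii k x}\langle j|\psi(k)\rangle\,\dd k$, which vanishes whenever $x$ is even; summing over $j$ gives $|\eta_0\rangle_x=0$.

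The delicate step is the sign bookkeeping in the middle paragraph: one has to verify both that $\tilde{D}$ really intertwines $\Phi_{n,k+\pi}$ with $-\Phi_{n,k}$ and that the positivity convention on $|\phi(\cdot)\rangle$ selects the minus sign rather than the plus. Neither check is hard, but both use evenness of $n$ crucially; in particular, for odd $n$ the median index is not preserved under $l\mapsto n+2-l$, so $|\psi(k+\pi)\rangle$ would not even lie in the span of $|\psi(k)\rangle$ and the lemma would fail in its present form.
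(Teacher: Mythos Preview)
Your argument is correct and follows essentially the same route as the paper: both establish $|\psi(k+\pi)\rangle=-|\psi(k)\rangle$ by tracking how $|\phi(k)\rangle$ and the isometry $U_{n,k}$ transform under $k\mapsto k+\pi$, then finish with the same half-period Fourier splitting. Your packaging via the diagonal operator $\tilde D$ and the intertwining relation $\Phi_{n,k+\pi}=-\tilde D\,\Phi_{n,k}\,\tilde D$ is a tidy way to state what the paper writes out componentwise as $|\phi(k+\pi)\rangle=\sum_{y=1}^{n}(-1)^{y-1}\alpha_y|\overline{y}\rangle-\alpha_{n+1}|\overline{n+1}\rangle$ and $|\widehat{2^{n-y}}_k\rangle=(-1)^y|\widehat{2^{n-y}}_{k+\pi}\rangle$, but the content is identical.
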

\begin{proof}
Let $x$ be even, so that $e^{\ii k x}=e^{\ii (k+\pi) x}$. From  the definitions of $|\eta_0\rangle$ and $|\tilde{k}\rangle$ we can see that
\begin{equation*}
|\eta_0\rangle_x=\frac{1}{2\pi}|x\rangle\otimes\int_{0}^{2\pi}\!e^{\ii k x}\,|\psi(k)\rangle\,\dd k=\frac{1}{2\pi}|x\rangle\otimes\int_{0}^{\pi}\!e^{\ii k x}\,(\,|\psi(k)\rangle+|\psi(k+\pi)\rangle\,)\,\dd k.
\end{equation*}
We are left to show that $|\psi(k)\rangle=-|\psi(k+\pi)\rangle$ for all $k\in[0,\pi]$. For $y\in[1\,..\,n+1]$ let $\alpha_y\in\R$ be such that $|\phi(k)\rangle=\sum_{y=1}^{n+1}\alpha_y|\overline{y}\rangle$. One can see that $|\phi(k+\pi)\rangle=\sum_{y=1}^{n}(-1)^{y-1}\alpha_y|\overline{y}\rangle-\alpha_{n+1}|\overline{n+1}\rangle$. Now, $|u_{0,k}\rangle=-|u_{0,k+\pi}\rangle$ and $|u_{1,k}\rangle=-|u_{1,k+\pi}\rangle$ imply $|\widehat{2^{n-y}}_k\rangle=(-1)^y|\widehat{2^{n-y}}_{k+\pi}\rangle$ for $y\in[1\,..\,n]$ and $|\widehat{0}_k\rangle=|\widehat{0}_{k+\pi}\rangle$. From the definition of $U_{n,k}$ we see that $|\psi(k)\rangle=U_{n,k}|\phi(k)\rangle=-U_{n,k+\pi}|\phi(k+\pi)\rangle=-|\psi(k+\pi)\rangle$.
\end{proof}

\begin{figure}[htp]
\centering
\includegraphics[scale=0.85]{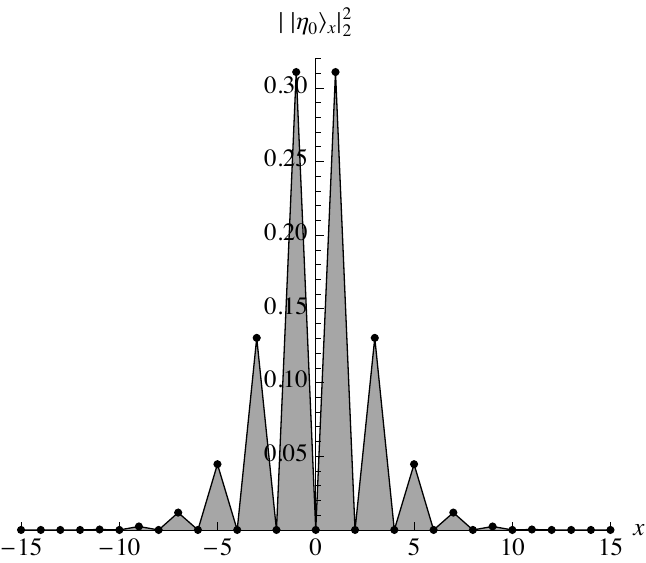}
\quad
\includegraphics[scale=0.85]{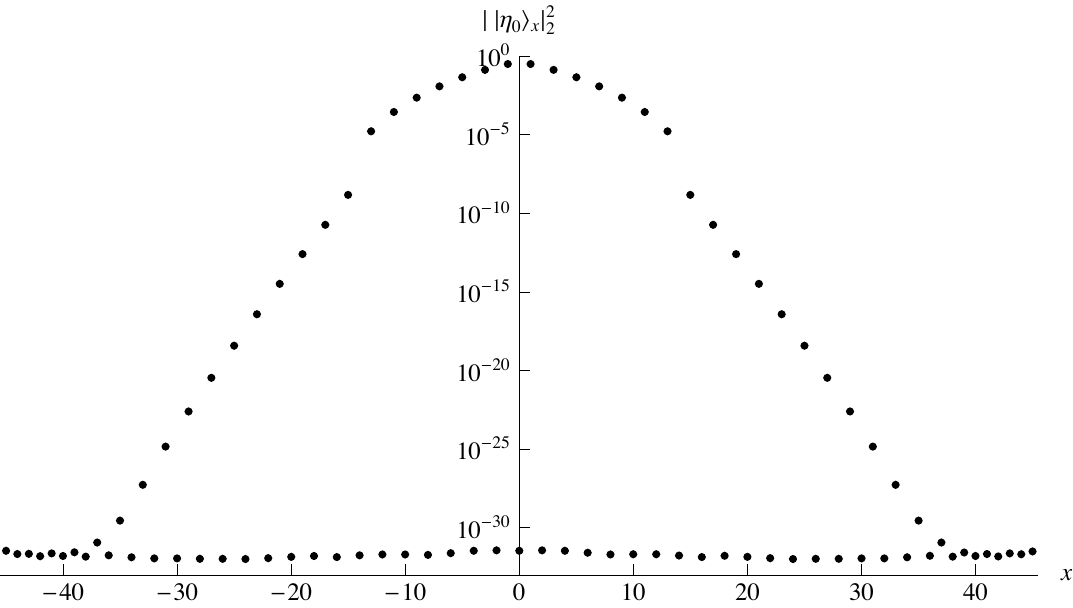}
\caption{For $n=14$, the probability $||\eta_0\rangle_x|_2^2$ of obtaining a snake starting at the position $x$ by measuring $|\eta_0\rangle$ in the standard basis (the linearly scaled plot on the left and the logarithmically scaled plot on the right). Note: numeric values below $10^{-30}$ cannot be trusted because they approach the precision of the floating point arithmetic used.}
\label{fig:eta0}
\end{figure}
 
As an evidence that $|\eta_0\rangle$ is localized around the position $0$ on the line, in Figure \ref{fig:eta0} we present numerically computed values of the probability $||\eta_0\rangle_x|_2^2$ in the case when $n=14$ for $x$ close to $0$. Lemma \ref{lem:eta0even} states that this probability is $0$ for even $x$, therefore let us now focus on odd $x$. We can see that $||\eta_0\rangle_x|_2^2$  seem to decrease exponentially as $|x|$ increases, and especially large jump seem to be from $||\eta_0\rangle_{14-1}|_2^2\approx1.61\cdot10^{-5}$ to $||\eta_0\rangle_{14+1}|_2^2\approx1.47\cdot10^{-9}$.

A similar pattern, a large gap between $\log(||\eta_0\rangle_{n-1}|_2^2)$ and $\log(||\eta_0\rangle_{n+1}|_2^2)$, was observed for values of $n$ other than $14$. This is the main motivation why Hypothesis \ref{hyp:local} is stated in the way it is stated. The reason why we choose $1$-norm in it instead of $2$-norm is a technicality: it allows us to make stronger statements about superpositions of vectors $|\eta_x\rangle$, where $x\in\Z$. Numerical data showed in Figure \ref{fig:tail} support Hypothesis \ref{hyp:local}.

\begin{figure}[htp]
\centering
\includegraphics[scale=0.95]{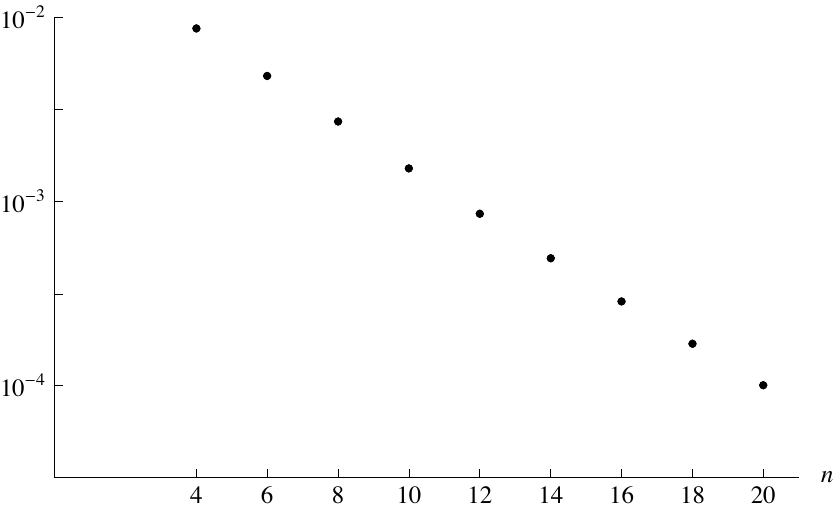}
\caption{In the logarithmic scale, $1$-norm of the part of the state $|\eta_0\rangle$ corresponding to snakes starting at least $n$ positions away from $0$ (note: $|\eta_0\rangle$ is $n$-dependent itself), that is, $\sum_{x=-\infty}^{-n}||\eta_0\rangle_x|_1+\sum_{x=n}^{+\infty}||\eta_0\rangle_x|_1$. For $n=2$ this value can be shown to be $0$.}
\label{fig:tail}
\end{figure}

Again, let $n=14$, and let us compute how $|\eta_0\rangle$ behaves under evolution of $H_{14}$. Section \ref{ssec:linewave} states that, under certain assumptions, for asymptotically large time $t$ the state  $|\eta_0\rangle$ evolves as two wave packets each moving with momentum $\frac{8}{n+2}=\frac{1}{2}$. We can see in Figure \ref{fig:2waves} that this kind of motion can be observed already for smaller values of $t$. Notice that peeks of the probability distribution around positions $\pm\frac{1}{2}t$ get relatively higher for larger $t$. Higher probabilities around the origin $x=0$ are likely there due to the fact that $\frac{1}{\sqrt{|\lambda''(k)|}}$ has a local maximum when $\lambda'(k)=0$.

\begin{figure}[htp]
\centering
\includegraphics[scale=0.95]{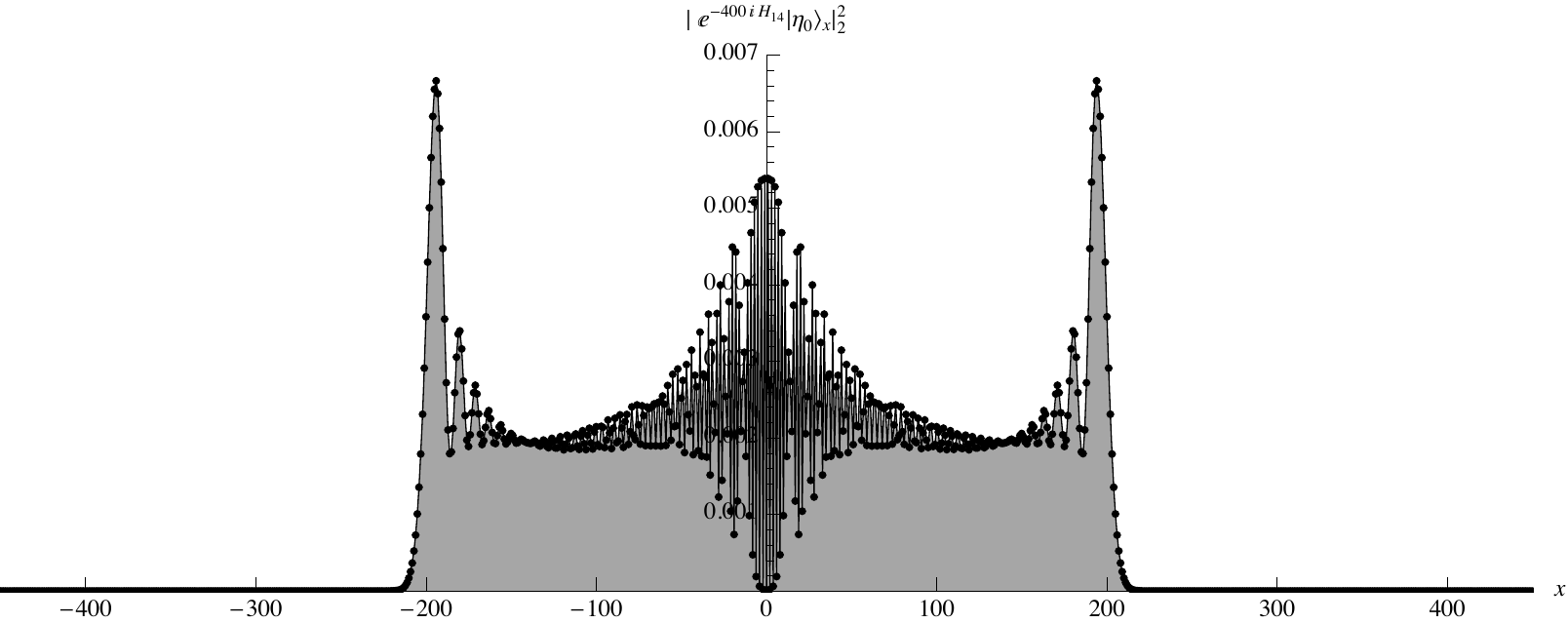}
\includegraphics[scale=0.95]{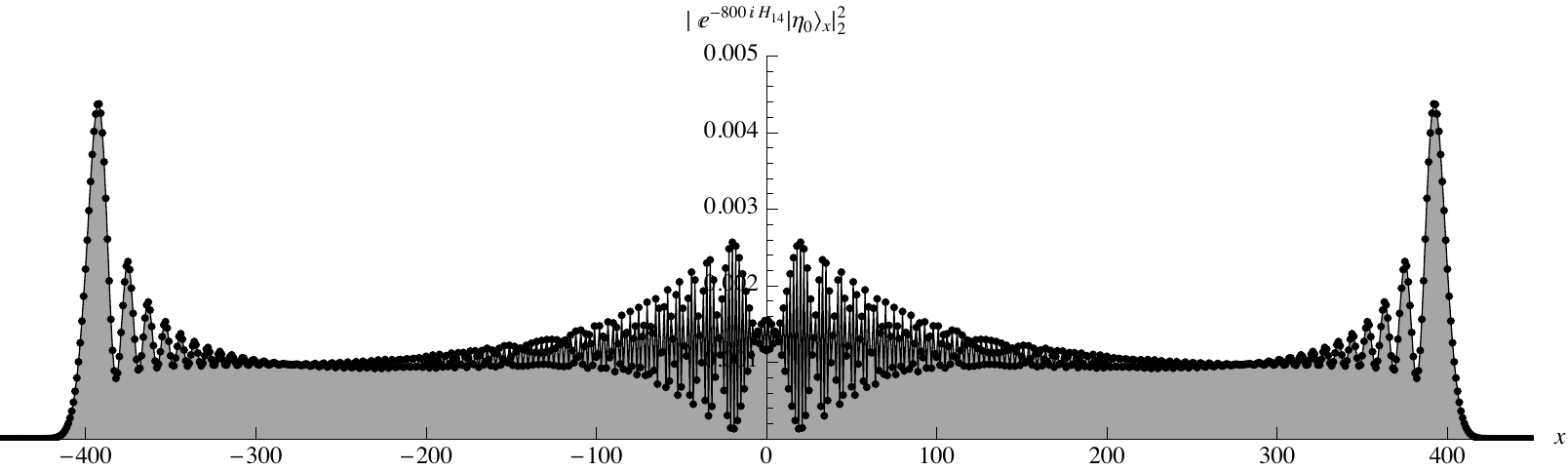}
\caption{The initial state $|\eta_0\rangle$ evolved for time $t=400$ and $t=800$.}
\label{fig:2waves}
\end{figure}

\section{Numerics for the quantum snake walk on the glued trees graph} \label{app:GMnum}

The same way as in the case of the quantum snake walk on the line, we can express $e^{-\ii \hat{H}_n t}|\xi_{x_0,k_0,\sigma}\rangle$ as an integral and then approximate it by a Riemann sum. Also, for every unit vector $|\chi\rangle\in\C^{\Z\times\{0,1\}^n}$ ($\C^{\Z\times\{0,1\}^n}$ being the column subspace) and $x\in\Z$ let $||\chi\rangle_x|_2^2=\langle\chi|\,(\,|x\rangle\langle x|\otimes\I_{\{0,1\}^n})\,|\chi\rangle$, which, if we measure $|\chi\rangle$ in the standard basis, is the probability of obtaining a snake whose initial vertex is in $Y_x$. Figure \ref{fig:difwaves} shows that the state $|\xi_{x_0,k_0,\sigma}\rangle$ evolves as a Gaussian-shaped wave packet moving with momentum $\tilde{\lambda}'(k_0)$ towards the glued part. The diffusion of the wave packet can be explained by the Heisenberg uncertainty principle: we cannot know both the position and the momentum
of the walk to an arbitrary precision.

\begin{figure}[htp]
\centering
\includegraphics[scale=0.9]{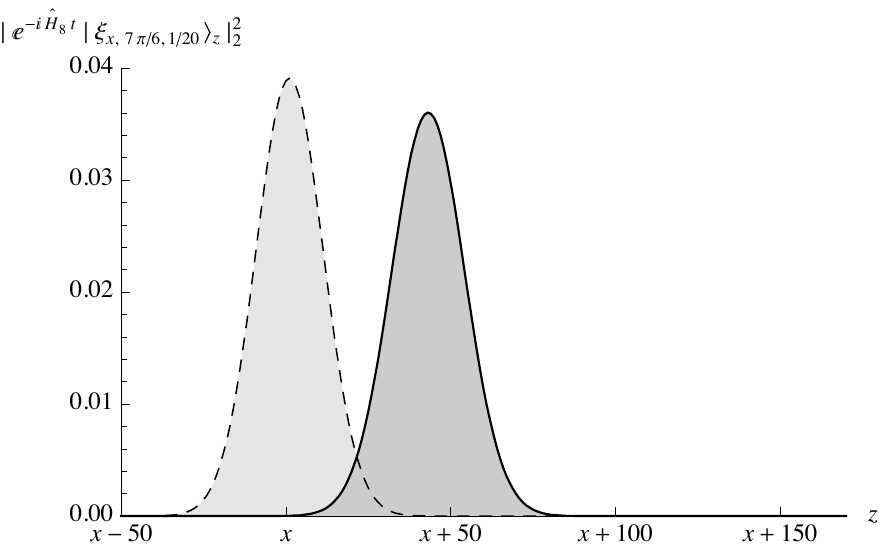}
\includegraphics[scale=0.9]{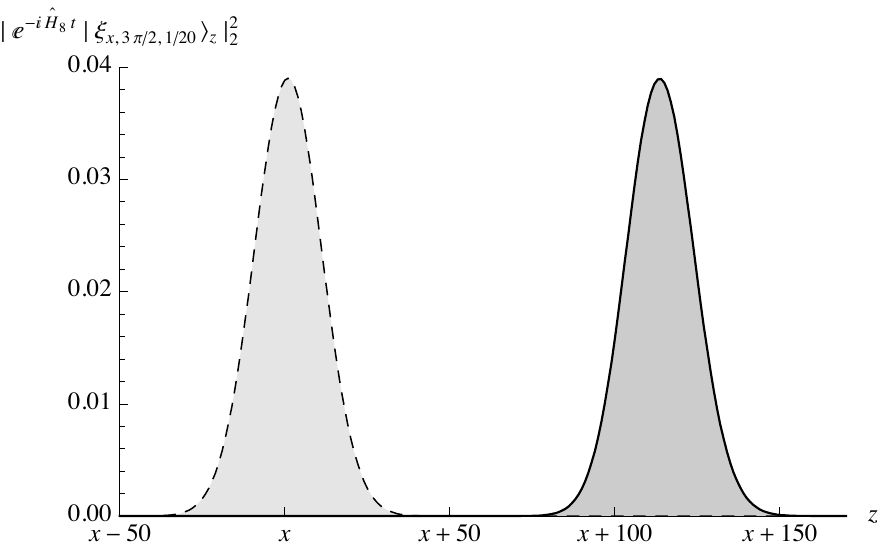}
\caption{The propagation of the wave packets $|\xi_{x_0,k_0,\sigma}\rangle$ for $k_0=\frac{7\pi}{6}$ (the left plot) and $k_0=\frac{3\pi}{2}$ (the right plot) in the case when $n=8$ and $\sigma=1/20$. The initial position of the wave packet is given with the dashed lines, and the position after time $t=100$ with the solid lines. Numerically we get $\hat{\lambda}'(\frac{7\pi}{6})\approx0.42$ and $\hat{\lambda}'(\frac{3\pi}{2})\approx1.13$.}
\label{fig:difwaves}
\end{figure}

For every $x\in\Z$ and $j\in\{0,1\}^n$ the span length of all snakes in $S(x,j)$ is the same, and it depends only on $j$. Therefore we can easily construct the unique observable $Q_n$ acting on $\C^{\{0,1\}^n}$ such that for every $|\chi\rangle\in\C^{\Z\times\{0,1\}^n}$, if we measure $|\chi\rangle$ in the standard basis and return the span length of the resulting snake, the expected value of this random variable is $\langle\chi|(\I_\Z\otimes Q_n)|\chi\rangle$. For $e^{-\ii \hat{H}_n t}|\xi_{x_0,k_0,\sigma}\rangle$ we have
\begin{equation*}
\begin{split}
\langle\xi_{x_0,k_0,\sigma}|e^{\ii \hat{H}_n t} (\I_\Z\otimes Q_n) e^{-\ii \hat{H}_n t}|\xi_{x_0,k_0,\sigma}\rangle
  & = \frac{1}{\erf(\pi/\sqrt{2}\sigma)}\int_{k_0-\pi}^{k_0+\pi} \frac{1}{\sigma\sqrt{2\pi}}e^{-\frac{(k-k_0)^2}{2\sigma^2}}\langle\hat{\psi}(k)| Q_n |\hat{\psi}(k)\rangle\,\dd k \\
  & \geq \min_{k} \langle\hat{\psi}(k)| Q_n |\hat{\psi}(k)\rangle.
\end{split}
\end{equation*}
Figure \ref{fig:span} shows a plot of numerically obtained values of $\langle\hat{\psi}(k)| Q_n |\hat{\psi}(k)\rangle$ for all values of even $n$ up to $20$. From this plot alone it is not completely clear how fast $\min_{k} \langle\hat{\psi}(k)| Q_n |\hat{\psi}(k)\rangle$ grows with $n$, but it seems to grow with the same rate as $\max_{k} \langle\hat{\psi}(k)| Q_n |\hat{\psi}(k)\rangle$.  For any $j\in\{0,1\}^n$ we have $\langle j| Q_n |j \rangle\geq|2|j|_H-n|$, where $|\cdot|_H$ is the Hamming norm, therefore for $j$ uniformly chosen at random we have $\langle j| Q_n |j \rangle\in\Omega(\sqrt{n})$. Because
\begin{equation*}
 |\hat{\psi}(\frac{\pi}{2})\rangle=-\ii\sqrt{\frac{2}{n+2}}\sum_{l=0}^{n/2}(|v_1\rangle\otimes|v_1\rangle)^{\otimes l}\otimes(|v_0\rangle\otimes|v_0\rangle)^{\otimes n/2-l},
\end{equation*}
 which is a quite balanced superposition over the vectors of $\{|j\rangle\,:\,j\in\{0,1\}^n\}$, we conjecture that $\min_{k} \langle\hat{\psi}(k)| Q_n |\hat{\psi}(k)\rangle\in\Omega(\sqrt{n})$.
\begin{figure}[htp]
\centering
\includegraphics[scale=1]{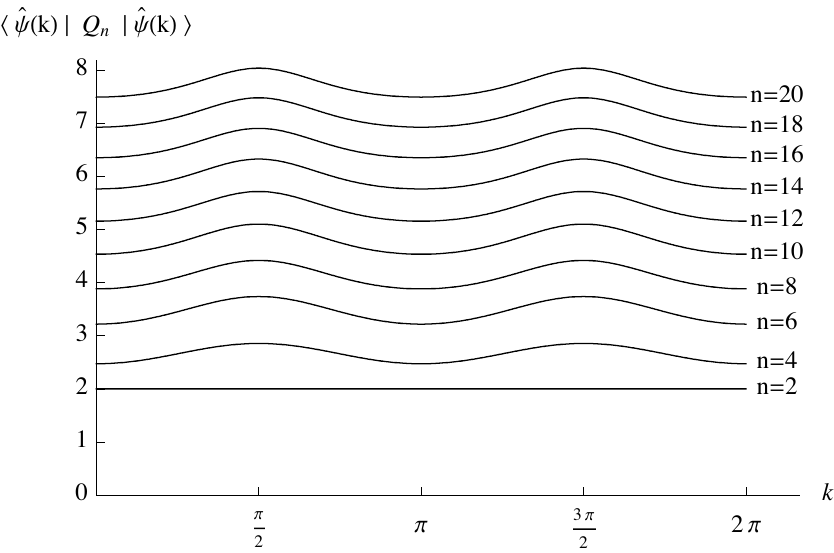}
\caption{The span length of the snake $|x\rangle\otimes|\hat{\psi}(k)\rangle$ for an arbitrary $x\in\Z$.}
\label{fig:span}
\end{figure}

\section{Evidence against efficiency of the algorithm} \label{app:unfort}

Consider the walk on the expanded glued trees graph started as the wave packet $|\xi_{x_0,k_0,\sigma}\rangle$, which moves towards the glued trees, and consider the maximal probability that, if at any point during the walk we make a measurement, we obtain a snake which contains vertices from both $Y_{-m}$ and $Y_{m+1}$. We are interested to choose $n\in O(\poly N)$ such that this probability is at least inverse-polynomially large in $N$ for $m=N$. It is not clear whether such choice is possible, and here we present numerical data suggesting that $n$ we need to choose may me very large, if not even exponentially large in $N$.

We introduced the vector $|\mu(k)\rangle$ in Hypothesis \ref{hyp:evec} as an eigenvector of $\tilde{H}_n$. Unless its coefficients $\alpha_{x,j}(k)$ behave very unexpectedly and $\arg'(\alpha_{x,j}(k))\ll -n$, for $x_0\ll -n$ and $\sigma\in\Omega\left(\frac{-1}{n+x_0}\right)$ the state $|\xi_{x_0,k_0,\sigma}\rangle$  can be well approximated by a superposition of states $|\mu(k)\rangle$, specifically, 
\begin{equation*}
|\xi_{x_0,k_0,\sigma}\rangle
\propto
\int_{k_0-\pi}^{k_0+\pi}
e^{-\frac{(k-k_0)^2}{4\sigma^2}}
e^{-\ii kx_0}|\tilde{k}\rangle\otimes|\hat{\psi}(k)\rangle\,\dd k
 \approx \frac{1}{\sqrt{2\pi}}
\int_{k_0-\pi}^{k_0+\pi}
e^{-\frac{(k-k_0)^2}{4\sigma^2}}
e^{-\ii kx_0}|\mu(k)\rangle\,\dd k.
\end{equation*}
Because $|\mu(k)\rangle$ is an eigenvector of $\tilde{H}_n$ for every $k\in\R$, the state $e^{-\ii \tilde{H}_n t}|\xi_{x_0,k_0,\sigma}\rangle$ can be well approximated by a superposition of states $|\mu(k)\rangle$ too. Therefore, if for all $k\in\R$ the state $|\mu(k)\rangle$ has a small overlap with the space of snakes connecting vertices from both $Y_{-N}$ and $Y_{N+1}$, so does $e^{-\ii \tilde{H}_n t}|\xi_{x_0,k_0,\sigma}\rangle$. Now, let us numerically calculate the coefficients $\alpha_{x,j}(k)$ and see what we obtain if we ``measure'' $|\mu(k)\rangle$ in the standard basis (note that $|\mu(k)\rangle$ has an infinite Euclidean norm). To be precise, we will be interested in the ``probabilities'' $p_{x,a}(k)$ defined as follows. For even $a\leq n$ and $x\in\Z$ and for odd $a\leq n$ and $x\in\Z+\frac{1}{2}$ let $P_{x,a}(k)$ be the projector to the space of all snakes having span $(x-\frac{a}{2},x+\frac{a}{2})\in\Z^2$. Let $p_{x,a}(k)=\langle\mu(k)|P_{x,a}(k)|\mu(k)\rangle$. We numerically calculated these values for $n=10$ and Figure \ref{fig:four} shows for every $a\in[2\,..\,n]$ how $p_{x,a}(\frac{3\pi}{2})$ depends on the value of $x$ (in this particular case $p_{x,1}(\frac{3\pi}{2})=0$ for all $x$).

\begin{figure}[htp]
\centering
\includegraphics[scale=0.93]{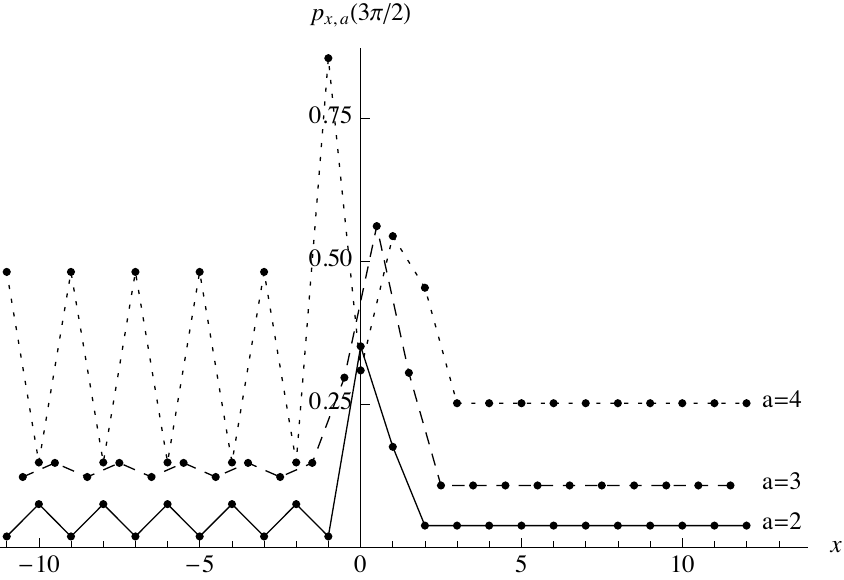}
\quad
\includegraphics[scale=0.93]{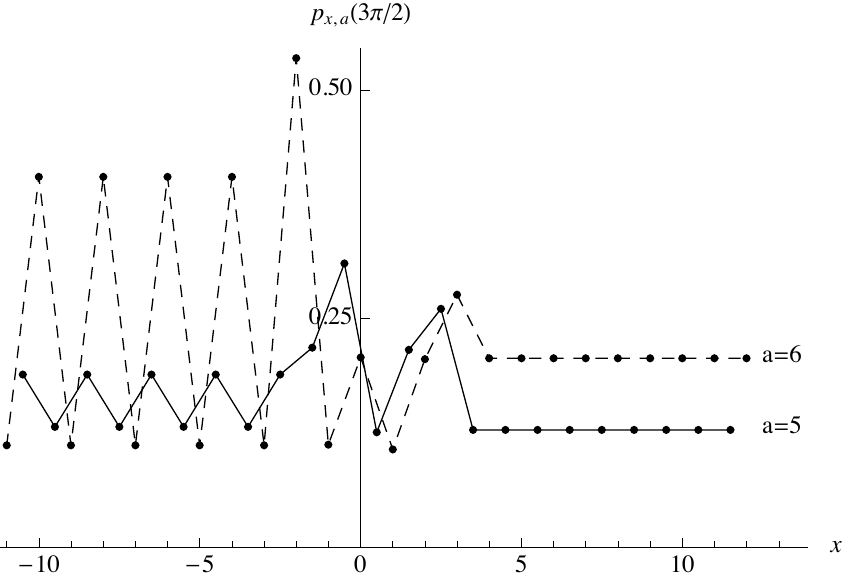}
\\
\medskip
\includegraphics[scale=0.93]{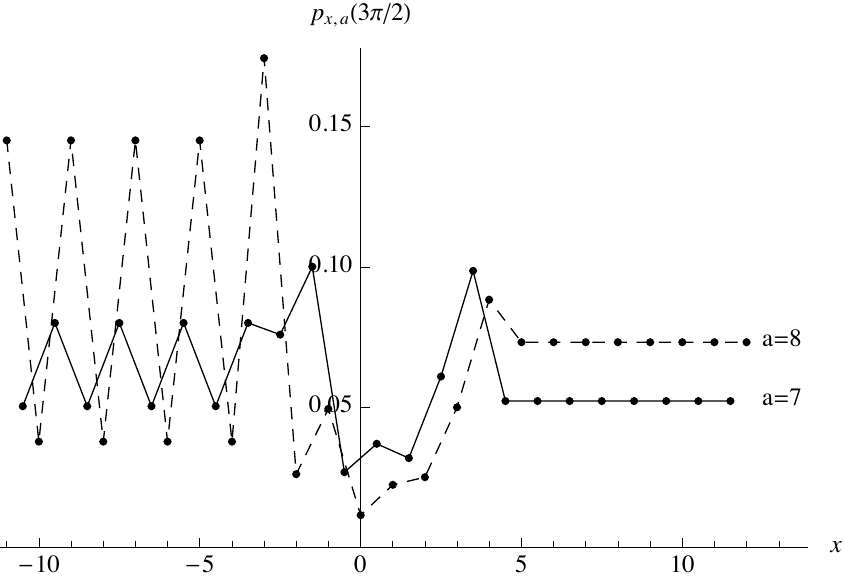}
\quad
\includegraphics[scale=0.93]{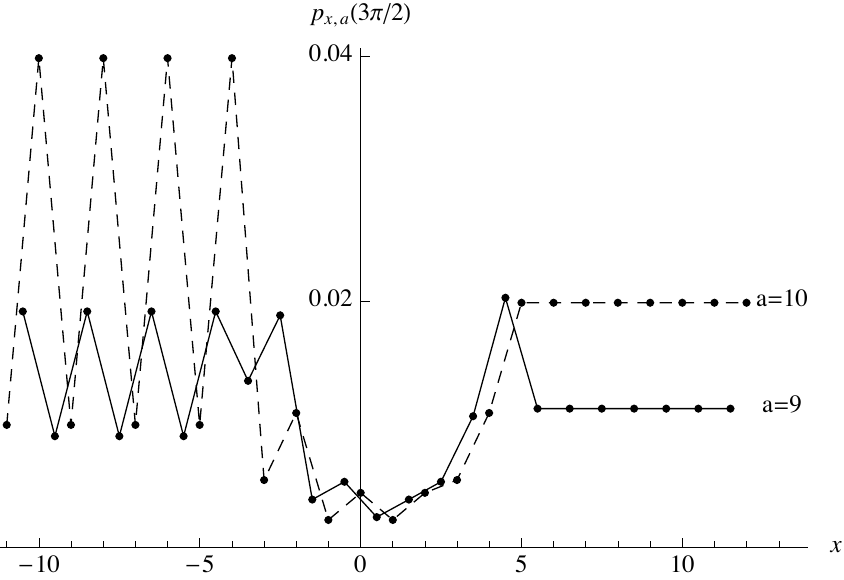}
\caption{The ``probability" of obtaining a snake having span $(x-\frac{a}{2},x+\frac{a}{2})$ if we ``measure" $|\mu(\frac{3\pi}{2})\rangle$. Note: the scaling of the plots differs.}
\label{fig:four}
\end{figure}

A similar pattern as in Figure \ref{fig:four} is observed if we consider values of $p_{x,a}(k)$ for $k$ other than $\frac{3\pi}{2}$ or ``measure'' eigenvectors of $\tilde{H}_n$ corresponding to other $k$-dependent eigenvalues of $\hat{H}_{n,k}$, not only the median eigenvalue. This means that the overlap between these eigenvectors and the space of snakes having a small span length drastically increases close to the glued part, and decreases for snakes having a large span length. Therefore, the expected span length of wave packets will get smaller when they come close to the glued part, and will increase again once they have been reflected from or gone through it. However, it is not clear what pattern this relative reduction of the span length follows when $n$ increases.

\end{document}